\def\R{\mathbb R}
\def\N{\mathbb N}
\def\Z{\mathbb Z}
\def\E{\mathbb E}
\def\shb{{\cal B}}
\newcommand{\indep}{\mathrel{\text{\scalebox{1.07}{$\perp\mkern-10mu\perp$}}}}
\newtheorem{thm}{Theorem}[section]
\newtheorem{lemma}{Lemma}[subsection]
\newtheorem{defi}[thm]{Definition}
\patchcmd{\ps@pprintTitle}{\footnotesize\itshape
       Preprint submitted to \ifx\@journal\@empty Elsevier
       \else\@journal\fi\hfill\today}{\relax}{}{}
\date{}
\title{Conditional heteroskedasticity in crypto-asset returns.}
\author{Charles Shaw \\ \\
\small{University of London, Birkbeck College (student)}\\
\small{Accepted for publication in} \\
\small{Journal of Statistics: Advances in Theory and Applications, 20(1).}\\
\small{email: cshaw11@mail.bbk.ac.uk}
}
\begin{document}
\maketitle
\begin{abstract}
In a recent contribution to the financial econometrics literature, Chu et al. (2017) provide the first examination of the time-series price behaviour of the most popular cryptocurrencies. However, insufficient attention was paid to correctly diagnosing the distribution of GARCH innovations. When these data issues are controlled for, their results lack robustness and may lead to either underestimation or overestimation of future risks. The main aim of this paper therefore is to provide an improved econometric specification. Particular attention is paid to correctly diagnosing the distribution of GARCH innovations by means of Kolmogorov type non-parametric tests and Khmaladze's martingale transformation. Numerical computation is carried out by implementing a Gauss-Kronrod quadrature. Parameters of GARCH models are estimated using maximum likelihood. For calculating P-values, the parametric bootstrap method is used. Further reference is made to the merits and demerits of statistical techniques presented in the related and recently published literature.

\end{abstract}

\newpage

\section{Introduction}

A cryptocurrency, such as Bitcoin, is a digital decentralized currency that makes use of cryptography to regulate the creation and transactions of the exchange unit. It is an emerging, retail-focused, highly speculative market that lacks a legal and regulatory framework comparable to other asset classes. Cryptocurrencies are decentralized in the sense that it is not created by any central authority and may, in principle, be immune to any central bank's interferences. At the time of writing this paper, it is estimated that the transaction volume in cryptocurrencies exceeds 100 million USD per day. The number of hedge funds that trade cryptocurrencies has recently reached approximately 100 for the first time (\cite{R18}), of which more than three-quarters were launched in 2017. The increase, from 55 hedge funds at Aug. 29 to 110 hedge funds at Oct. 18, comes as investors pile into the high-octane cryptocurrency market, which has seen a tenfold increase in its value in 2017. 

Although the cryptocurrency market is still relatively new and undeveloped, there have been a number of interesting developments. Just by way of illustration, in Q4 of 2017 alone, the following occurred: JP Morgan confirmed heavy investment in blockchain technology, which underpins cryptocurrency transactions; CME Group, the world's largest futures exchange operator, announced the launch of trading in Bitcoin derivatives at the end of 2017, pending regulatory review; Swiss bank Vontobel, the country's second-biggest provider of structured products comes after CME Group, announced the launch of Bitcoin futures on the Swiss stock exchange. Such involvement on behalf of institutional market participants makes it interesting to study this newly emerging asset class. 

From a regulatory perspective, the Financial Conduct Authority (FCA) has expressed concern that retail investors have increasingly been buying Bitcoin contracts for difference (CFDs). The FCA listed price volatility, leverage, charges and funding costs, and price transparency as four risks to investing in crypto-based CFDs. This paper examines the first of these risks: volatility.\footnote{ The decision to focus on price volatility is largely motivated by data availability, since it has been impossible for me to obtain data on other risks. However, we now have fairly trustworthy closing price data from four main cryptocurrency exchanges. And this data is what is used in this study.}

\subsection{Related literature}
Understanding price volatility dynamics is of considerable interest to those seeking to understand the price dynamics of a financial assets. To this end, there is a well-developed body of research on econometric inference techniques for (mostly second order) stationary financial data. However, there is a paucity of research on cryptocurrency volatility modelling. Of particular interest is the recent work on volatility of cryptocurrencies by Chu, Chan, Nadarajah, and Osterrieder (2017, \cite{Chu17}), which provides the first modelling of the seven most popular cryptocurrencies. The aim of this paper is to extend their work and propose an alternative, and arguably more robust, econometric specification.

In related literature, \cite{Gro14} empirically analyses Bitcoin prices using an autoregressive jump-intensity GARCH model and finds strong evidence of time-varying jump behaviour. \cite{Chap} test for the optimal number of states for a Markov regime-switching (MRS) model to capture the regime heteroskedasticity of Bitcoin. \cite{K17} run a model comparison exercise according to three information criteria, namely Akaike (AIC), Bayesian (BIC) and Hannan-Quinn (HQ) and find that the AR-CGARCH model gives the best fit for Bitcoin. \cite{CZ18} fit more than 1,000 GARCH models to the log returns of the exchange rates to find that two-regime GARCH models produce better VaR and ES predictions than single-regime models for four of the main cryptocurrencies, namely Bitcoin, Ethereum, Ripple and Litecoin. \cite{STT17} investigate Bitcoin for the BTC/USD exchange rate using high-frequency (transaction-level) data obtained from Mt. Gox exchange, the leading platform during the sample period of June 2011 to November 2013, and note the asset's extreme volatility and apparent discontinuities in the price process. They assert two empirical observations. First, they argue that jumps are an essential component of the price dynamics of the BTC/USD exchange rate: out of the 888 sample days, they identify 124 jump days. Second, they show that jumps cluster in time: they find that runs of jump days that are incompatible with the assumption of independent Poisson arrival times. They conclude that order flow imbalance, illiquidity, and the dominant effect of aggressive traders are significant factors driving the occurrence of jumps. 

At first, these findings seems intuitive. Cryptoassets, by virtue of their design, do not rely on the stabilizing policy of a central bank. As a result, the reaction to new information - whether this information is spurious or fundamental - are prone to demonstrate high volatility relative to established assets. This volatility is amplified by the relative illiquidity of the market. In addition, the absence of official market makers would make cryptoassets fragile to large market movements.

Using a GARCH (1,1) model, \cite{C17} examined Bitcoin's volatility in respect to the macroeconomic variables of countries where it was being traded the most. It was argued that if the volatility levels follow the same trend as in the last six and a half years, Bitcoin may match the fiat currency levels of volatility in 2019-2020. Building on this work, \cite{Chu17} fitted 12 GARCH-type models to seven major cryptocurrencies. The the distribution of the innovation process were taken to be one of normal, skew normal, Student's $t$, skew Student's $t$, skew generalized error distribution, inverse Gaussian, and generalized hyperbolic distribution. Model selection criteria were then used to pick the best fit. They found that Gaussian innovations provide the smallest values of AIC, AICc, BIC, HQC and CAIC for each cryptocurrency and each GARCH-type model. Further, \cite{Chu17} make use of the skewed generalized error distribution (SGED). 

This paper will demonstrate that using a skewed generalized error distribution is a poor modelling choice. This is because the moment generating function of a SGED does not exist under some important conditions. Using Student's $t$ to model the innovation process also represents a poor choice for financial engineering applications since the distribution does not possess a moment generating function. If innovations followed a Student's $t$ distribution under a risk-neutral measure then the value of a call option would be infinite.

\section{Preliminaries}
Given a price process $S_i$, we define $(X_i)_{i \in \N}=\ln (S_i/S_{i-1})$ to be the daily log-returns of observed data series, indexed by time index $i$, where
\begin{equation}
\label{eq:ARIntro}
X_i=\mu_i+\sigma_i \epsilon_i, ~ i>r, 
\end{equation} 
is driven by some innovation process $\epsilon_i$, for $r \geq 1$. This allows $\mu_i$ and $\sigma_i$ to depend on ${\cal F}_{i-1}=\sigma \{H_i,X_1,\ldots,X_{i-1}\}$, where ${\cal F}_{i-1}$ is the $\sigma-$algebra induced by variables that are observed at time $i-1$, and $H_i$ is a random vector with plausibly exogenous variables. Then, $(\epsilon_i)_{i>r}$ is a normal random i.i.d. sequence satisfying the standard assumptions $\E[\epsilon_i]=0$ and $Var(\epsilon_i)=1$. Note that $\epsilon_i \indep {\cal F}_{i-1}, ~ \forall ~ i>r.$ We further assume that $\epsilon_1, \ldots, \epsilon_n$ are observations from the process $(\epsilon_i, i \in \Z)$\footnote{ Without loss of generality, the index $i$ can be assumed to take values in either $\N_0$ or in $\Z$.} to be strictly  stationary, ergodic, and nonanticipative. The requirement of causality is often added to the set of assumptions. However, we are able to get this for "free" since every strictly stationary GARCH process is causal.

The above framework was used by \cite{Engle82} to introduce the so-called Auto-Regressive Conditional Heteroskedasticity (ARCH) class of processes which stemmed from \eqref{eq:ARIntro} with non-constant $\sigma_i$. These allowed the conditional variance $\sigma^2_i$ to depend on lagged values of $(X_i-\mu_i)^2$. \cite{Boll86} generalized this framework with the so-called GARCH models, and since then there has been a range of models that fall under the broad umbrella of (Generalized) Autoregressive conditional heteroskedasticity i.e. GARCH literature. An important feature in GARCH models is that $\sigma_i$ depends on its own past values. Further, the conditional distribution $\{X_i |{\cal F}_{i-1}\}$ is such that 
$\{X_i |{\cal F}_{i-1} \} \sim N (\mu_i, \sigma^2_i).$ In this case we say that the innovations are Gaussian and that $ \E\{(X_i-\mu_i)^2 |{\cal F}_{i-1} \} = \sigma^2_i.$ Since $\sigma$ is not observable per se, we need $\sigma_n$ to admit stationarity. In other words, we need $\E [\sigma^2_n]$ to converge to some positive constant as $n \to \infty$. Next, we outline the main models in this family that are relevant for the purposes of this study.

\subsection{GARCH(1,1)}
Setting $r=1$, $\mu_i=\mu$, and $H=\sigma_1$, yields GARCH (1,1)\footnote{ It is evident that an ARCH(1) model can be derived from \eqref{eq:GAR} by simply setting $\beta=0$.}:
\begin{equation}
\label{eq:GAR}
\sigma^2_i=\omega+\alpha (X_{i-1}-\mu)^2+\beta \sigma^2_{i-1}=\omega + \sigma^2_{i-1} (\alpha \epsilon^2_{i-1}+\beta), \qquad \forall \quad i\geq 2,\quad \alpha \geq 0,\quad \beta \geq 0, \quad \omega >0. 
\end{equation}
By taking expectations of $\sigma^2_n$, it is possible to show its stability conditions:
\begin{eqnarray*}
\label{eq:GAR1}
\E [\sigma^2_n]&=& \omega +\alpha \E \Big[\E\{X_{n-2}-\mu)^2 |{\cal F}_{i-1} \} +\beta \E (\sigma^2_{i-1})  \Big] \\
&=& \omega (\alpha+\beta) \E (\sigma^2_{i-1}) \\
&=& \omega \Big[ \frac{1-(1-k)^{n-1}}{k}+(1-k)^{n-1}(\sigma^2_{1})   \Big]
\end{eqnarray*}
where $k=1-\alpha-\beta$. If $k>0$ then we can see that $\E [\sigma^2_n]$ stabilizes as $n \to \infty$: $
\lim_{n \to \infty} \E[\sigma^2_n] = \frac{\omega}{k}$

\begin{lemma} 
\label{Remark1} Let us consider two processes $\sigma$ and $\nu$, with starting conditions $\sigma_1$ and $\nu_1$, both driven by a single innovation process $\epsilon$:
\begin{eqnarray*}
\label{eq:GAR2}
|\sigma^2_{n+1}-\nu^2_{n+1}| &=& |\sigma^2_{n}-\nu^2_{n}| (\alpha \epsilon^2_n+\beta)  \\
\label{eq:GAR21}&=& |\sigma^2_{n}-\nu^2_{n}| \prod_{k=1}^n (\alpha \epsilon^2_k+\beta).
\end{eqnarray*}
The condition $k>0$, needed for finiteness of first moment, is stronger than the stationarity condition $\mathbb{E}\Big[\ln(\alpha \epsilon^2_k+\beta)\Big]<0$.
\end{lemma}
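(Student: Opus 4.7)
The plan has two independent parts: first verify the telescoping identity, then establish the implication with strictness.

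For the identity, I would start from the GARCH(1,1) recursion \eqref{eq:GAR} applied to both $\sigma$ and $\nu$. Since both processes share the same innovation sequence $(\epsilon_i)$ and the same parameters $\omega,\alpha,\beta$, subtracting the two recursions kills the constant term $\omega$ and gives $\sigma^2_{n+1}-\nu^2_{n+1} = (\sigma^2_n-\nu^2_n)(\alpha\epsilon_n^2+\beta)$. Taking absolute values (the factor $\alpha\epsilon_n^2+\beta$ is nonnegative) produces the first equality. Iterating this relation down to the base case $|\sigma_1^2-\nu_1^2|$ — formally an induction on $n$ — yields the product representation. This is routine and should not present any obstacle.

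For the implication, the key tool is Jensen's inequality applied to the concave function $\ln$. I would write
\[
\mathbb{E}\bigl[\ln(\alpha\epsilon_k^2+\beta)\bigr] \;\leq\; \ln\mathbb{E}[\alpha\epsilon_k^2+\beta] \;=\; \ln(\alpha+\beta),
\]
using $\mathbb{E}[\epsilon_k^2]=1$ from the standing assumptions. The condition $k=1-\alpha-\beta>0$ is exactly $\ln(\alpha+\beta)<0$, so it immediately forces $\mathbb{E}[\ln(\alpha\epsilon_k^2+\beta)]<0$, which is the stationarity condition. Hence $k>0$ implies stationarity.

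To show the implication is strict (i.e. that $k>0$ is genuinely a stronger condition), I would exploit the fact that Jensen's inequality is strict whenever the random variable $\alpha\epsilon_k^2+\beta$ is not almost surely constant. Since $\epsilon_k$ is a non-degenerate standardised innovation, strict inequality holds as soon as $\alpha>0$; thus $\mathbb{E}[\ln(\alpha\epsilon_k^2+\beta)] < \ln(\alpha+\beta)$. By continuity of the left-hand side in $(\alpha,\beta)$, there is an open range of parameters with $\alpha+\beta\geq 1$ (so $k\leq 0$) for which $\mathbb{E}[\ln(\alpha\epsilon_k^2+\beta)]$ remains strictly negative — exhibiting situations where stationarity holds but the first-moment condition fails. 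I expect the only slightly delicate point is this last step: making the qualifier \emph{strictly stronger} rigorous rather than merely noting that the two conditions differ. Stating it as a strict Jensen gap plus a continuity/perturbation argument is the cleanest route.
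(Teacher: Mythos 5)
Your proposal is correct and rests on exactly the same key step as the paper's proof: Jensen's inequality for the concave logarithm, $\E\bigl[\ln(\alpha\epsilon_k^2+\beta)\bigr]\leq\ln\E\bigl[\alpha\epsilon_k^2+\beta\bigr]=\ln(\alpha+\beta)$, combined with the observation that $k>0$ is precisely $\ln(\alpha+\beta)<0$. Your version is in fact tighter than the paper's: you make explicit the use of $\E[\epsilon_k^2]=1$ and the derivation of the telescoping identity (which the paper asserts with a typo in the base term), and your strict-Jensen-plus-continuity argument for why the implication is \emph{strictly} one-directional is an addition the paper does not attempt --- the paper instead spends most of its proof on an LLN/SLLN digression showing the product vanishes under the stationarity condition, which is not needed for the stated comparison of conditions.
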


\begin{proof}
If  $\E \Big[\ln(\alpha \epsilon^2_k+\beta)\Big]<0$ then, by application of L.L.N. to sequence $\ln(\alpha \epsilon^2_k+\beta), \quad k \geq 1$, we can see that
\[
|\sigma^2_{n}-\nu^2_{n}| \prod_{k=1}^n (\alpha \epsilon^2_k+\beta) \to 0.
\]
Further, by application of S.L.L.N.,
\[
\frac{1}{n}\sum^n_{k=1}  \Big[\ln(\alpha \epsilon^2_k+\beta) \Big] \to \E  \Big[\ln(\alpha \epsilon^2_k+\beta)\Big]<0.
\]
Therefore, for a given $\alpha >0$,
\[
\frac{1}{n}\sum^n_{k=1} \Big[ \ln(\alpha \epsilon^2_k+\beta) \Big] \to \E  \Big[\ln(\alpha \epsilon^2_k+\beta) \Big]< - na
\]
for almost all $n \geq 1$, and 
\[
\prod_{k=1}^n \Big[\alpha \epsilon^2_k+\beta\Big] < e^{-na}
\]
for almost all $n \geq 1$. Since the log function is concave, we apply Jensen's inequality to yield: 
\begin{eqnarray*}
\E \Big[ \ln(\alpha \epsilon^2_k+\beta) \Big] \leq \ln  \E  \Big[\alpha \epsilon^2_k+\beta \Big].
\end{eqnarray*}
Hence the condition $k>0$ is stronger than the stationarity condition $\E \Big[\ln(\alpha \epsilon^2_k+\beta) \Big]<0$, as required.
\end{proof}

\subsection{GARCH(p,q)}
It is possible to introduce more lags for $X$ and $\sigma$. Consider $r=max(p,q)$, $H=(\sigma_1,\ldots,\sigma_r)$, $\mu_i=\mu$, and
\begin{equation*}
\sigma^2_i= \omega +\sum^p_{k=1} \beta_k \sigma^2_{i-k}+\sum^q_{j=1} \alpha_j (X_{i-j-\mu)^2}, \quad i>r
\end{equation*} 
where $\alpha_j, \beta_j \geq 0$, $j \in \{1, \ldots, r \}$, and $\omega >0$. Therefore, $
\E [\sigma^2]=\omega+ \sum^r_{j=1} \lambda_j \E [\sigma^2_{i-j}],$ where $\alpha_j=0 \quad \forall j>q$, $\beta_j=0 \quad \forall j>p$, and $\lambda_j=\alpha_j+\beta_j$. Asymptotic properties are discussed in next Lemma.

\begin{lemma}
Let 
$ k=1-\sum_{j=1}^{\max(p,q)} \lambda_j.$
Then, 
\begin{equation*}
\lim_{n \to \infty} \E[\sigma^2_n]=\left\{
                \begin{array}{ll}
                  \omega/k, & \qquad \forall k > 0\\
                  +\infty   & \qquad \forall k \leq 0
                \end{array}
              \right.
\end{equation*}
\end{lemma}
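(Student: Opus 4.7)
The plan is to reduce the claim to a classical question about a scalar linear recurrence. Applying $\E[\cdot]$ to the defining equation for $\sigma^2_i$ and using $\E\{(X_i-\mu)^2 \mid {\cal F}_{i-1}\}=\sigma^2_i$ exactly as in the GARCH$(1,1)$ computation, I obtain
\[
u_n := \E[\sigma^2_n] \;=\; \omega + \sum_{j=1}^{r} \lambda_j\, u_{n-j}, \qquad n>r,
\]
with $r=\max(p,q)$, $\lambda_j=\alpha_j+\beta_j\geq 0$, and prescribed positive initial data $u_1,\ldots,u_r$ determined by $H=(\sigma_1,\ldots,\sigma_r)$. The statement then splits naturally into the two regimes $k>0$ and $k\leq 0$.

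For $k>0$, the unique constant solution of the recurrence is $u^{\ast}=\omega/k$. I would set $v_n:=u_n-u^{\ast}$, observe that $v_n=\sum_{j=1}^{r}\lambda_j v_{n-j}$, and show that every root of the characteristic polynomial $P(z)=z^{r}-\sum_{j=1}^{r}\lambda_j z^{r-j}$ lies strictly inside the unit disc. This is direct: if $|z|\geq 1$ were a root, then
\[
|z|^{r} \;\leq\; \sum_{j=1}^{r}\lambda_j |z|^{r-j} \;\leq\; |z|^{r-1}\sum_{j=1}^{r}\lambda_j \;=\; (1-k)|z|^{r-1} \;<\; |z|^{r-1},
\]
a contradiction. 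Passing to the Jordan decomposition of the associated companion matrix then gives $v_n\to 0$ exponentially fast, so $u_n\to \omega/k$, exactly generalising the closed form obtained earlier for the $(1,1)$ case.

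For $k\leq 0$, the strategy is the opposite: rule out boundedness. When $k<0$, the polynomial $P$ satisfies $P(1)=k<0$ and $P(z)\to+\infty$ as $z\to\infty$, so by the intermediate value theorem it has a real root $\rho>1$. Since the companion matrix has non-negative entries, Perron--Frobenius identifies $\rho$ as its spectral radius and furnishes a strictly positive left eigenvector; because the initial data $u_1,\ldots,u_r$ are positive and $\omega>0$, the projection of the forced solution onto that eigendirection is strictly positive, which forces $u_n\gtrsim c\rho^{n}\to\infty$. In the borderline case $k=0$ the polynomial $P$ has $1$ as a root, so the homogeneous solutions alone need not diverge; instead I would exhibit the particular solution $u_n^{\mathrm{part}}=cn$ with $c=\omega/\sum_{j=1}^{r}j\lambda_j>0$ by direct substitution, and combine it with the bounded homogeneous part to conclude that every non-negative solution of the recurrence satisfies $u_n\to+\infty$.

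The main obstacle is the $k\leq 0$ analysis, and in particular the borderline case $k=0$: establishing genuine divergence (as opposed to mere unboundedness or failure to converge) requires both the linearly growing particular solution and careful use of positivity of the initial data to guarantee that the coefficients along the dominant mode have the right sign. The $k>0$ regime, by contrast, collapses cleanly into the spectral radius estimate displayed above, which is really the only non-bookkeeping step in the argument.
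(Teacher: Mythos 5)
Your proof is correct, and it follows the same overall skeleton as the paper's (which is taken from R\'emillard): reduce to the forced linear recurrence $u_n=\omega+\sum_{j=1}^r\lambda_j u_{n-j}$, then split into $k>0$, $k<0$, $k=0$ and analyse the characteristic polynomial $P(z)=z^r-\sum_{j=1}^r\lambda_j z^{r-j}$ via the companion matrix. The execution of each case, however, is genuinely different, and in two of the three cases yours is cleaner. For $k>0$ the paper does not subtract the fixed point; it introduces a $z$-weighted companion matrix $A_z$ with row-sum norm $\|A_z\|=z<1$, sums the Neumann series to get $(I-A_z)^{-1}\tilde\omega$, and extracts the limit $\omega/K(1)$ by solving $(I-A_z)y=e_1$; your route (fixed point $u^*=\omega/k$, the estimate $|z|^r\le(1-k)|z|^{r-1}$ forcing all roots strictly inside the unit disc, then Jordan form) reaches the same conclusion with less bookkeeping and also exhibits the exponential rate. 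For $k<0$ the paper compares with a normalised recurrence $u_n=\omega/a+\sum(\lambda_j/a)u_{n-j}$ rather than invoking Perron--Frobenius; note that your appeal to a \emph{strictly} positive eigenvector requires irreducibility of the companion matrix, which can fail if some $\lambda_j$ vanish, so you should either restrict to the indices with $\lambda_j>0$ or replace this step by a direct positivity/liminf argument (e.g.\ $\liminf u_n\ge\omega+(1-k)\liminf u_n$ forces $\liminf u_n=+\infty$ when $k\le 0$, which in fact disposes of both subcases at once). For $k=0$ the paper argues through $A^n\to B$ for the stochastic companion matrix (itself implicitly assuming irreducibility and aperiodicity), whereas your explicit particular solution $u_n^{\mathrm{part}}=\omega n/\sum_j j\lambda_j$ is more concrete; just be aware that ``the bounded homogeneous part'' relies on the unit-modulus eigenvalues of a stochastic matrix being semisimple (true, since $\sup_n\|A^n\|_\infty=1$), a point worth stating since otherwise repeated roots on the circle could produce polynomially growing homogeneous modes.
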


\begin{proof}

Proof follows \cite{Remi13}, which expands on the proof partly shown in \cite{Boll86}. First, we define $K(z)= z^r \sum_{j=1}^r \lambda_j z^{r-j}.$ The aim is to demonstrate that $k= K(1) > 0 \implies \lim_{n \to \infty} \E[\sigma^2_n] = \frac{\omega}{K(1)}$ and $ k=K(1) \leq 0 \implies \lim_{n \to \infty} \E[\sigma^2_n] =+ \infty.$ Let $U \equiv \{z\subset \mathbb{C}:|z|<1\}$ be a unit ball on a complex plane. $K(1)>0$ implies that all roots of the polynomial $K(z)$ are within $U$. These roots are the eigenvalues of the matrix 
\begin{equation*}
A=
\begin{bmatrix}
    \lambda_{1}   & \lambda_{2}  &  \dots & \lambda_r \\
    1             & 0            & \dots  &  0 \\
    0             & 1            & \dots  &  0 \\
    \vdots        & \vdots       & \vdots &  0 \\
    0 & \dots     &   1 &0
    \end{bmatrix}.
\end{equation*}
For exposition, we require the spectral radius of $A$. 

\begin{defi}
Let $X$ be a Banach space and let $f \colon \shb \to \mathbb{R}$ be a bounded linear functional on $X$. The \textit{norm} of $f$, denoted by $\|f\|$, is defined by $ \|f\| = \inf \{ L \in [0,+\infty) :          |f(x)| \leq L \|x\| \mbox{ for all } x \in X \}.$ If $\|\cdot\|$ is any norm of the set of $r \times  r$ matrices, then the \textit{spectral radius} of $A$, denoted by $\rho(A)$, is defined by $\rho(A)=\lim_{n \to \infty} \|A^n\|^{1/n}.$ 
\end{defi}

Now, the statement $K(1)>0$ is equivalent to saying that the spectral radius of $A$ is smaller than 1. If the roots of $K$ are inside the unit ball, then $K(1)>0$. If $K(1)=0$ then 1 is a unit root not in the unit ball, which is a contradiction. If $K(1)<0$ then $K(z) \to \infty$ as $z \to \infty$ so there would be a real root of $K$ greater than 1. Lets say that now $K(1)>0$. Then $\exists z_0 \in (0,1)$ so that $K(z)>1, \quad \forall  z \in [z_0,1]$. We fix $z$, and let
\begin{equation*}
v_i=
\begin{bmatrix}
\E[\sigma^2_{i}] \\
z\E[\sigma^2_{i-1}] \\
z^{r-1} \E[\sigma^2_{i-r+1}]
\end{bmatrix} , \quad \makebox{and} \quad \tilde{\omega}
\begin{bmatrix}
\omega \\
0 \\
\vdots \\
0
\end{bmatrix}.
\end{equation*}
$ \E[\sigma^2_{i}]= \omega +\sum^r_{j=1} \lambda _j \E[\sigma^2_{i-j}], \quad \forall \quad i \geq r+1$ is equivalent to $v_i=\tilde{\omega} A_z v_{i-1} \quad \forall \quad i \geq r+1$, where
\begin{equation*}
A_z=
\begin{bmatrix}
\lambda_1 & \lambda_2/z & \cdots & \lambda_r/z^{r-1} \\
z & 0 & \cdots & 0 \\
0 & z & \cdots &  0 \\
\vdots & \ddots & \ddots & 0 \\
0 & \ldots & z & 0
\end{bmatrix}.
\end{equation*}
Define the norm $\|B\| = \max_{1\leq i \leq r} \Big( \sum^r_{j=1} |B_{ij}| \Big)$. We get $\|A_z=z<1 \|$, since $K(z)>0$. It follows that $ v_n=(I+A_z+ \ldots +A_z^{n-r-1}) ~ \tilde{\omega}+A_z^{n-r} v_r.$ Next, define $|x|= \max_{1\leq i \leq r} |x_j|.$ Then, for any matrix $B$, $|B_x| \leq \|B\| |x|$. Therefore $|A^{n-r}_z v_r| \leq \|A^{n-r}_z\| |v_r| \leq z^{n-r} |v_r| \to 0$ as $n \to \infty$. Since $\|A\| <1$, $I+A_z+ \ldots +A_z^{n-r-1} \to (I-A_z)^{-1}$, as $n \to \infty$. Thus, $v_n$ converges to $(I-A_z)^{-1} \tilde{w}=wy$, where $(I-A_z)y=e_1=(1,0,\ldots,0)'$. Therefore, $y_j=z^{j-1}y_1$, and $1=y_1-\sum_{j=1}^r \lambda_j z^{-j+1}y_1=y_1K(1)$. Therefore, $y_1=1/K(1)$ and $\E[\sigma^2_n]$ converges to $\omega/K(1)$. Next, we demonstrate that 
\begin{equation}
\label{eq:K0}
k= \leq 0 \implies \lim_{n \to \infty} \E[\sigma^2_n] = + \infty
\end{equation}
and
\begin{equation}
\label{eq:K1}
k=K(1)>0 \implies \lim_{n \to \infty} \E[\sigma^2_n] = \frac{\omega}{K(1)}.
\end{equation}
We examine \eqref{eq:K0} first to show that $k= \leq 0$ implies that $\lim_{n \to \infty} \E[\sigma^2_n] =  \infty$.\\

Case 1. Let $K(1)=0$. Then $A$ is the transition matrix of an irreducible Markov chain s.t. $A^n \to B$, with 
\begin{equation*}
B_{ij}=\pi_j=\frac{\sum^r_{k=j} \lambda_k}{\sum^r_{k=1} k\lambda_k}, \quad \forall i,j \in \{1, \ldots, r \}.
\end{equation*}
Therefore, if $\sum^r_{k=j} \lambda_k=1$, then $ \E[\sigma^2_n] \to \infty$ as $n \to \infty$.\\

Case 2. Let $K(1)<0$, and $a=\sum^r_{j=1} \lambda_j>1.$
Therefore, 
\begin{equation*}
\E[\sigma^2_n]> \frac{\E[\sigma^2_n]}{a}=\frac{\omega}{a}+\sum^r_{j=1} \frac{\lambda_j}{a} \E[\sigma^2_{n-j}].
\end{equation*}
Let $u_n$ be the solution of $u_n = \frac{\omega}{a}+\sum^r_{j=1} u_{n-j}, \quad n>r, \quad j \in \{ 1,\ldots,r \}$ and $u_j=\E[\sigma^2_n]$. Then $\E[\sigma^2_n]-u_n>0$ $\forall$ $n>r$. Since $\sum^r_{k=j} \lambda_k=1$, $u_n/n$ converges to a positive number. Hence $\E[\sigma^2_n] \to \infty$ as $\to \infty$, as required. 
\end{proof}

\subsection{Exponential GARCH (EGARCH)}

The EGARCH models, introduced by \cite{Nel91}, allow $r=1$, $H=\sigma^2_1$, and $ \ln(\sigma^2_i)=\omega+\alpha [|\epsilon_{i-1}|-\E(|\epsilon_{i-1}|)]+\gamma \epsilon_{i-1}+\beta \ln (\sigma^2_{i-1}), \quad i\geq 2.$ In this framework, $\E[\ln(\sigma^2_i)]$ convergence to a (finite) limit exists if and only if $|\beta|<1$. This limit is $\omega/(1-\beta).$

\subsection{Nonlinear GARCH (NGARCH)}
A recognised problem with the standard GARCH models is their inability to differentiate negative and positive innovations. We tend to observe that volatility changes are more pronounced after a large negative shock when compared to an equally large positive shock. This is the so-called leverage effect. The  NGARCH models, introduced by \cite{ENG93}, aim to model such asymmetry of volatility behaviour in previous specifications. The models is now set up as $
\sigma^2=\omega+\alpha (\epsilon_{i-1}-\rho)^2 \sigma^2_{i-1} +\beta \sigma^2_{i-1}, \quad i\geq 2.$ The restrictions for the positivity of $\sigma^2_i$ are $\omega>2$, $\alpha,\beta \geq 0$. Parameter $\rho$ is the leverage effect. The limit of $\E[\sigma^2_n]$ is $\omega/k$ for $k>0$. $\E[\sigma^2_n]$ converges if and only if $1-k=\alpha(1+\theta^2)+\beta<1$. An alternative version of the NGARCH model was originally estimated by Engle and Bollerslev \cite{EB86}, $\sigma^2_i=\omega+\alpha |\epsilon_{i-1}|^\delta+\beta \sigma^2_{t-1}.$ With most financial assets, the rates of returns are estimated $\delta<2$, although not always significantly so.

\subsection{Glosten-Jagannathan-Runkle (GJR)-GARCH}

The motivation behind GJR-GARCH models, introduced by \cite{Glo93}, was to model the asymmetric behaviour of volatility when innovations are negative or positive. In these models, $r=1$, $H=\sigma_1$:
\[\sigma^2_i = \omega +\alpha \sigma^2_{i-1} \epsilon^2_{i-1}+\beta \sigma^2_{i-1} +\gamma \sigma^2_{i-1} \{\sup(0,-\epsilon^2_{i-1})\}^2, \quad i \geq 2.\]
Again, the limit of $\E[\sigma^2_n]$ is $\omega/k$ for $k>0$. $\E[\sigma^2_n]$ converges if and only if $k=1-\alpha-\beta-\gamma/2>0$.

\subsection{Augmented GARCH}

The Augmented GARCH models, introduced by \cite{D97}, contain all the GARCH specifications mentioned previously. The assumption $\epsilon_i  \sim N(0,1)$ is relaxed, and it is assumed that only the common distribution of $\epsilon_i$ are mean 0 and unit variance. Under the assumption $\alpha_i \geq 0 $ for $i \in \{0,\ldots,5\}$, the model is described as follows. Let us define
\begin{equation*}
f_\delta (x)=\left\{
                \begin{array}{ll}
                  \frac{x^\delta-1}{\delta} & \qquad \forall \delta>0\\
                  \ln (x)   & \qquad \forall \delta=0
                \end{array}
              \right.
\end{equation*}
A similar consideration yields
\begin{equation*}
f^{-1}_\delta (x)=\left\{
                \begin{array}{ll}
                 (x\delta+1)^{1/\delta} & \qquad \forall \delta>0\\
                  e^x   & \qquad \forall \delta=0
                \end{array}
              \right.
\end{equation*}
Then, define the volatility process $\sigma$ as $\sigma^2_i=f^{-1}_\lambda (\phi_i-1)$ $i \geq 2$,  where
\begin{eqnarray}
\xi_{1,i} & = & \alpha_1+\alpha_2 |\epsilon_i-c|^\delta +\alpha_3\{\sup(0,c-\epsilon_i)\}^\delta,  \\
\xi_{2,i} & = & \alpha_4 f_\delta (|\epsilon_i-c|)+\alpha_5 f_\delta \{\sup(0,c-\epsilon_i\}, \\
\phi_i & = & \alpha_0 +\phi_{i-1} \xi_{1,i-1}+\xi_{2,i-1}.
\end{eqnarray}
Strict stationarity is achieved if $\E[\ln (\xi_{1,i-1})]<0$ and $\E[\max(\xi_{2,i-1})]<\infty.$ Further, by Jensen's inequality, $\E (\ln (\xi_{1,i-1})<0$ if $\E[\xi_{1,i-1}]<1$. Further discussion of the time series particulars of models a-la GARCH in the context of cryptocurrency modelling is presented in \cite{Chu17}.\footnote{ For a classic treatment of GARCH type models, the interested reader is referred to \cite{Bro91}, \cite{GM95}, and \cite{GM96}. A more up to date exposition is available \cite{FZ10}.}

\newpage
\section{Data}
Recent daily market capitalization figures for all cryptocurrencies can be accessed via \cite{Coin17}, who provide daily cryptocurrency data (transaction count, on-chain transaction volume, value of created coins, price, market cap, and exchange volume) in CSV format. The daily data sample of the coins selected is from 22-Jun-14 to 24-Mar-18. Other notable cryptocurrencies such as Ethereum etc were omitted due to lack of sufficient time-series data available. Source code for the data collection tools is available at Github (\url{https://github.com/whateverpal/coinmetrics-tools}). The data is transformed into its log returns. Although there is a deluge of new cryptocurrencies (around 1500 cryptocurrencies recorded at the time of writing), vast majority are new and have extremely short time-series observations. However, there is sufficiently long time-series available for the cryptocurrencies in our sample. When taken together, this sample represented around 66 per cent of the total market at the end of 2017. This figure at end of March stood at 56 per cent. At the end of 2017 Bitcoin alone dominated 62.5 per cent of the market. This figure at end of March stood at 44 per cent, see Table \ref{sample}.

\begin{table}[]
\centering
\caption{Cryptocurrency Market Capitalizations in USD: Jun 22, 2014 - Mar 24, 2018. Data accessed 24-Mar-18.}
\label{sample}
\begin{tabular}{llllll}
Name         & Symbol & Market Cap        & Price      & \% 7d   & Share of Market \\ \hline
Bitcoin      & BTC    & \$146,064,176,501 & \$8,622.92 & 11.93\% & 43.98\%         \\
Ripple       & XRP    & \$25,145,211,528  & \$0.64     & 7.87\%  & 7.57\%          \\
Litecoin     & LTC    & \$9,005,259,992   & \$161.43   & 11.35\% & 2.71\%          \\
Dash         & DASH   & \$3,343,681,609   & \$419.69   & 15.18\% & 1.01\%          \\
Monero       & XMR    & \$3,339,885,427   & \$210.54   & 12.31\% & 1.01\%          \\
Dogecoin     & DOGE   & \$400,962,064     & \$0.00     & 18.35\% & 0.12\%          \\
MaidSafeCoin & MAID   & \$132,063,392     & \$0.29     & 13.73\% & 0.04\%         
\end{tabular}
\end{table}

\begin{figure}[htbp]
 \subsection{Daily log returns, ACF \& PCF, kurtosis in logarithmic returns - Bitcoin,	Dashcoin, Dogecoin,	and Litecoin}
 \caption{Data for Bitcoin,	Dashcoin, Dogecoin,	and Litecoin from June 22, 2014 - March 24, 2018. (Left) Graphs of the daily log returns of the exchange rates. (Middle) Sample autocorrelation function (ACF) and partial autocorrelation function (PACF) for the return series. (Right) The solid blue line is the empirical Pdf of log returns. The dotted red line is the normal Pdf. Initial signs indicate presence of volatility clustering and kurtosis in logarithmic returns.} 
\centering
\includegraphics[scale=.3]{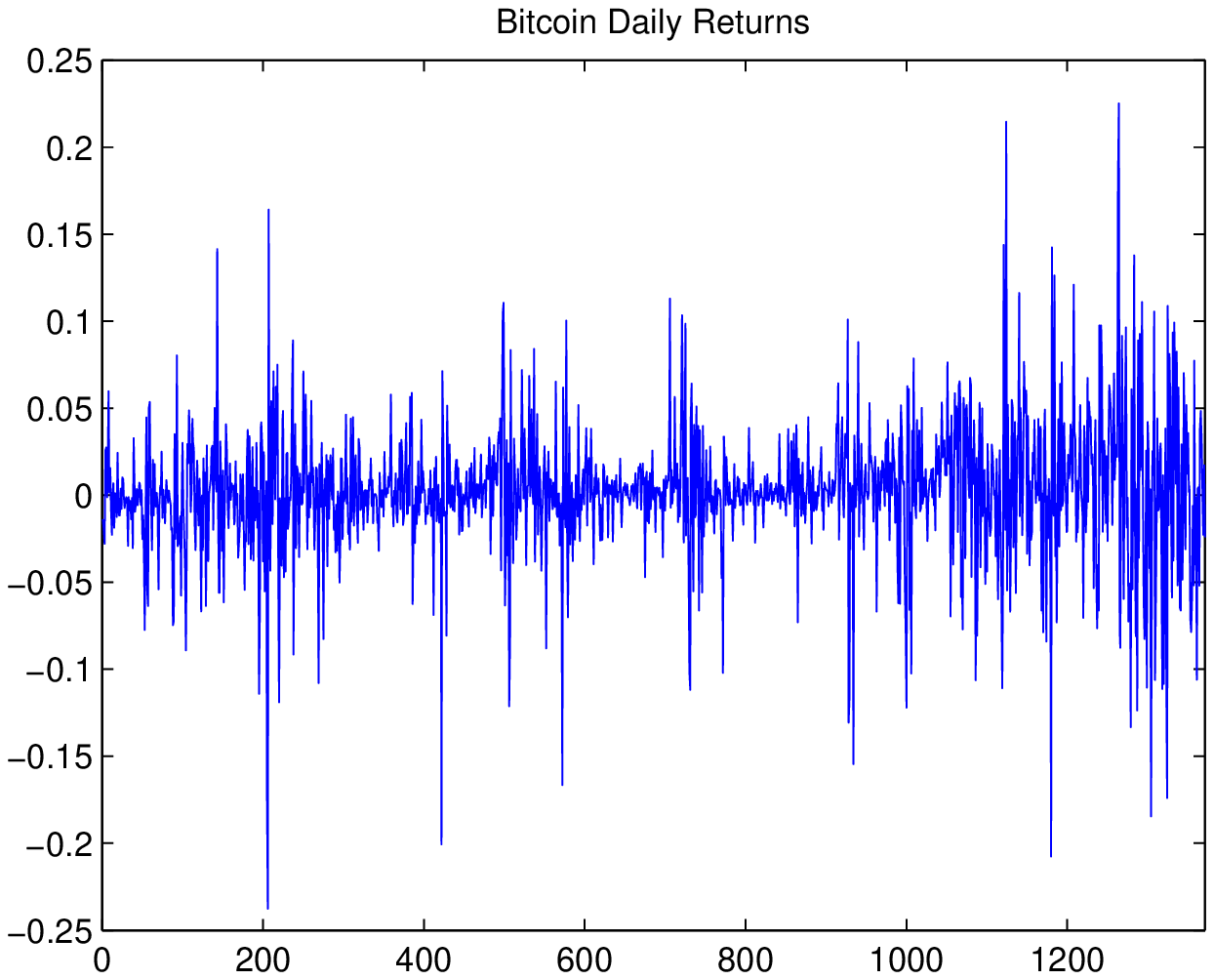}
\includegraphics[scale=.3]{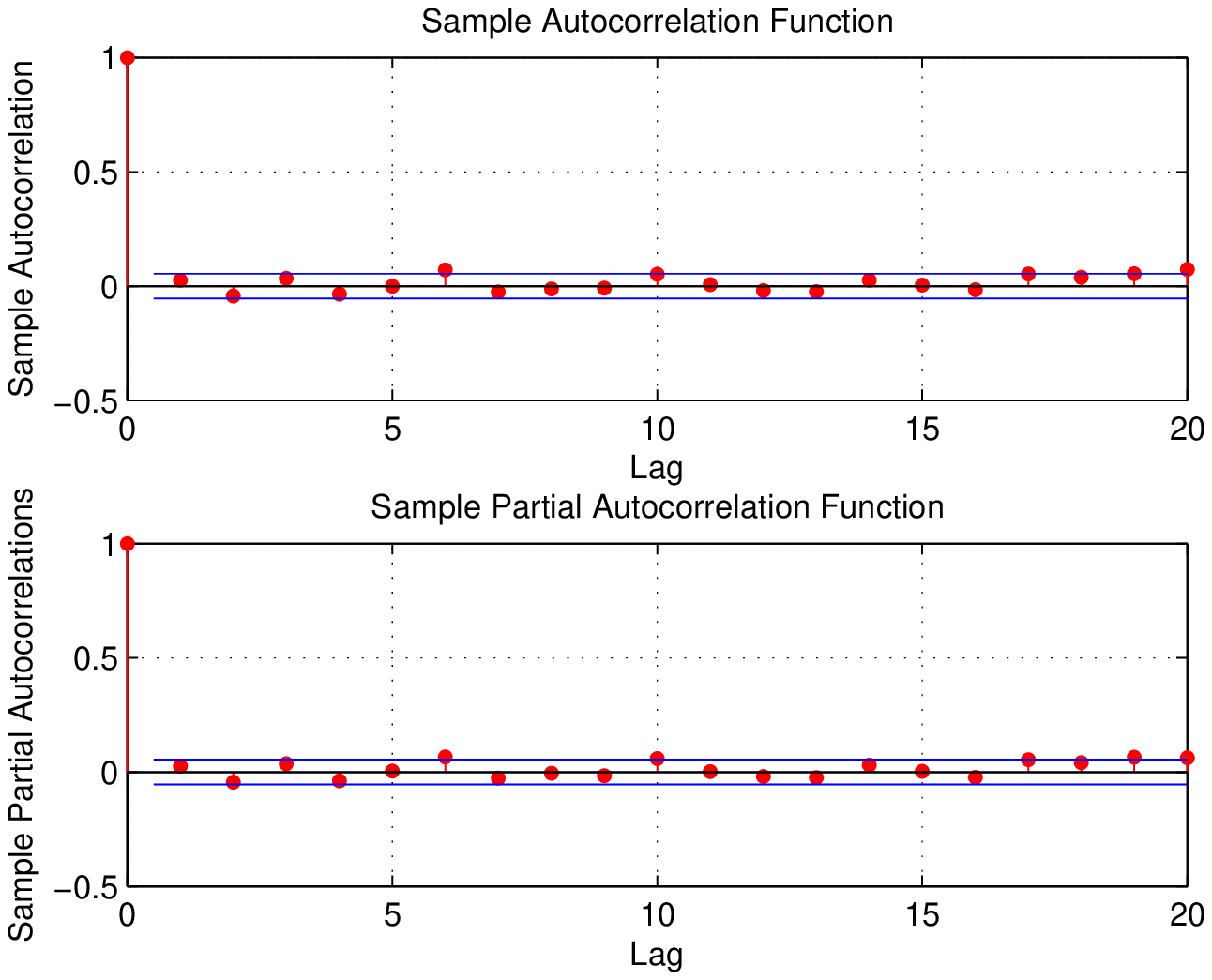}
\includegraphics[scale=.3]{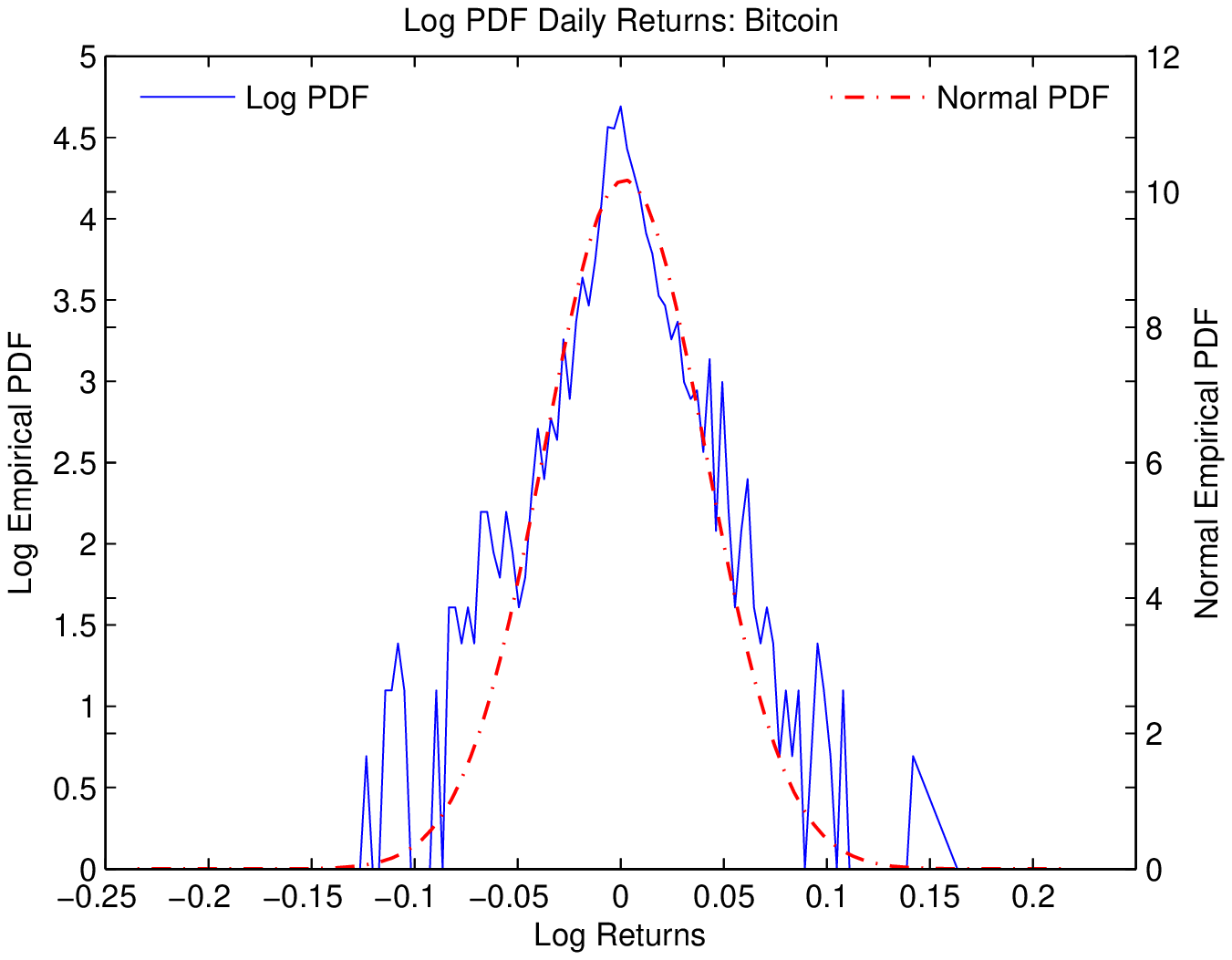}
\includegraphics[scale=.3]{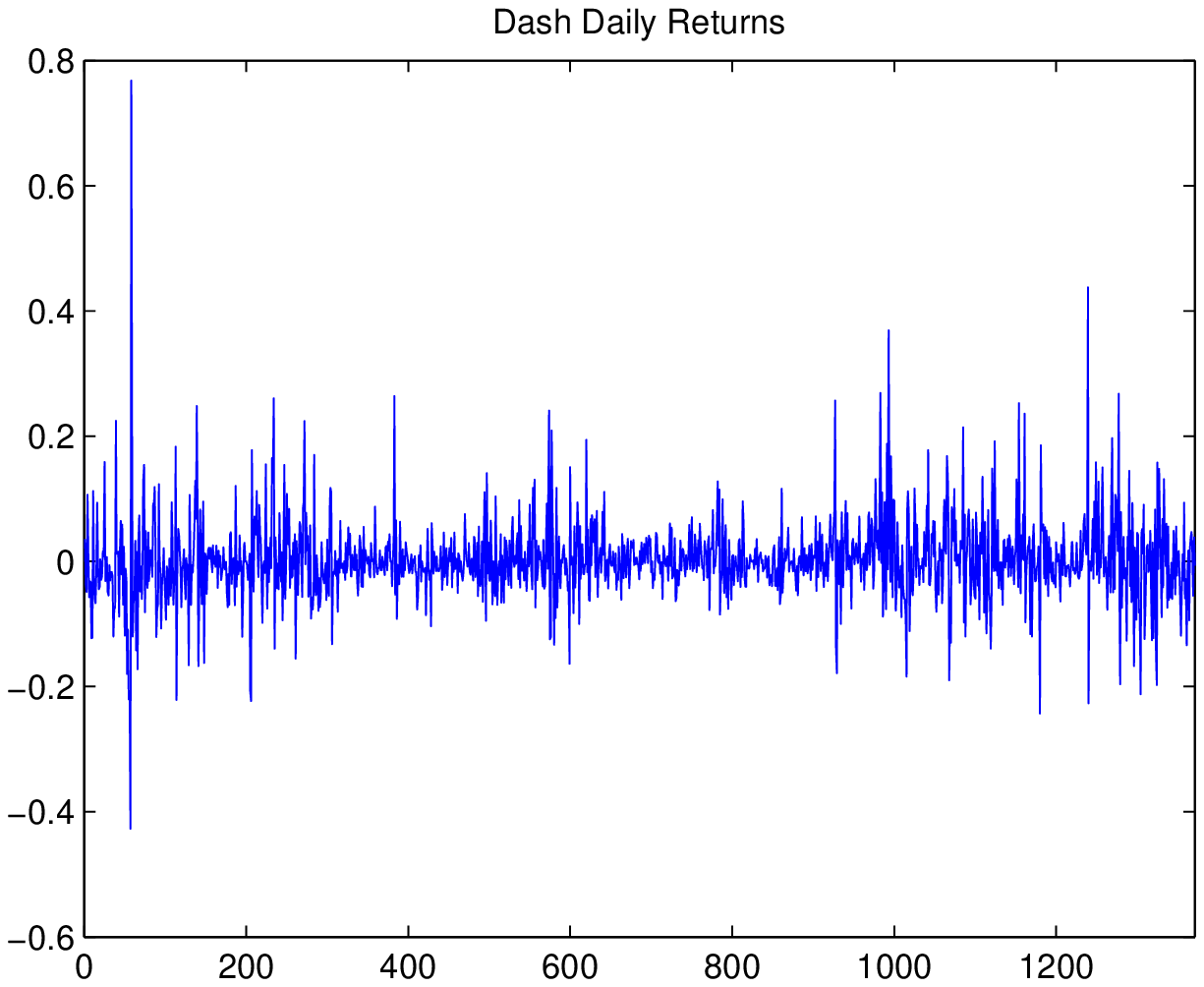}
\includegraphics[scale=.3]{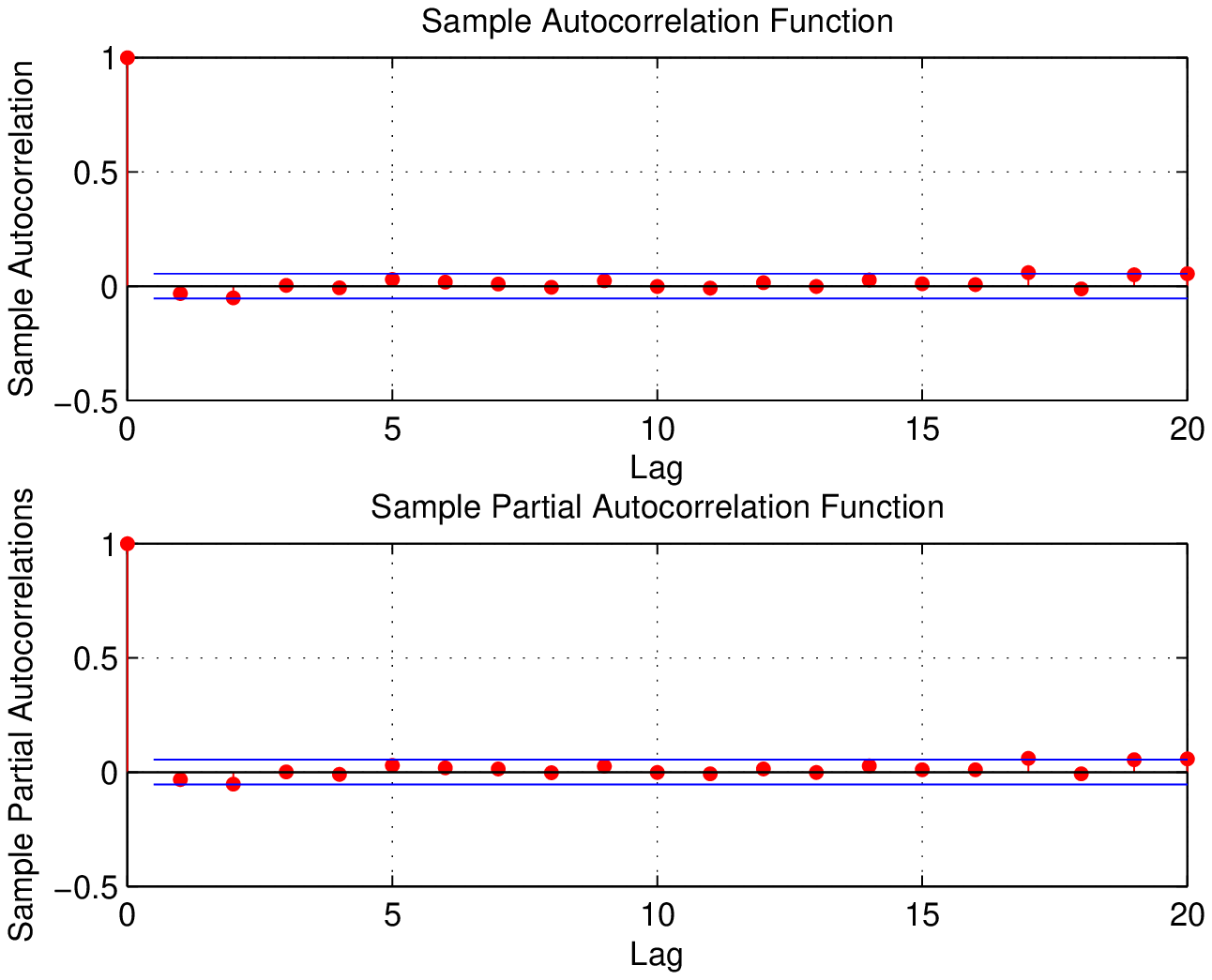}
\includegraphics[scale=.3]{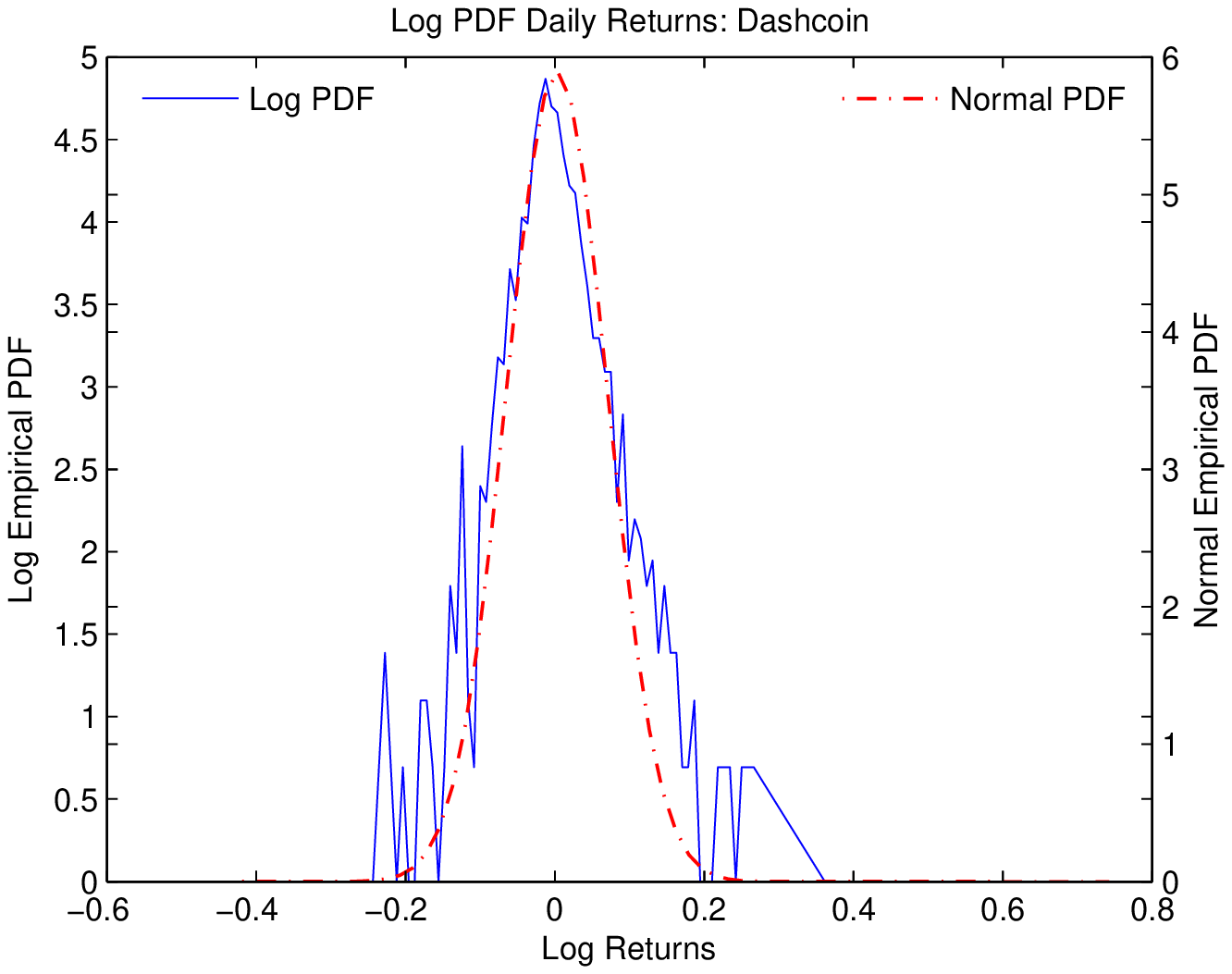}
\includegraphics[scale=.3]{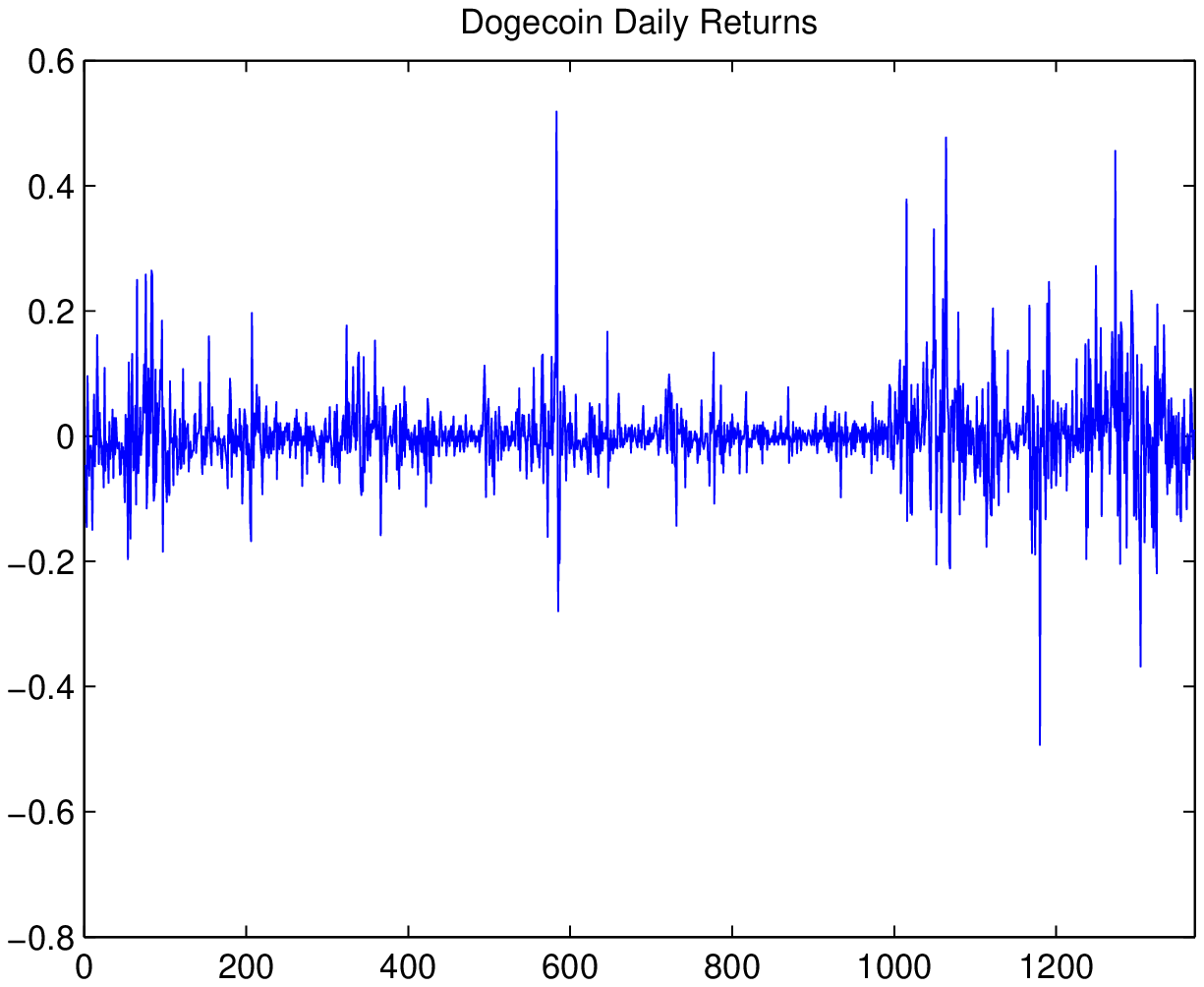}
\includegraphics[scale=.3]{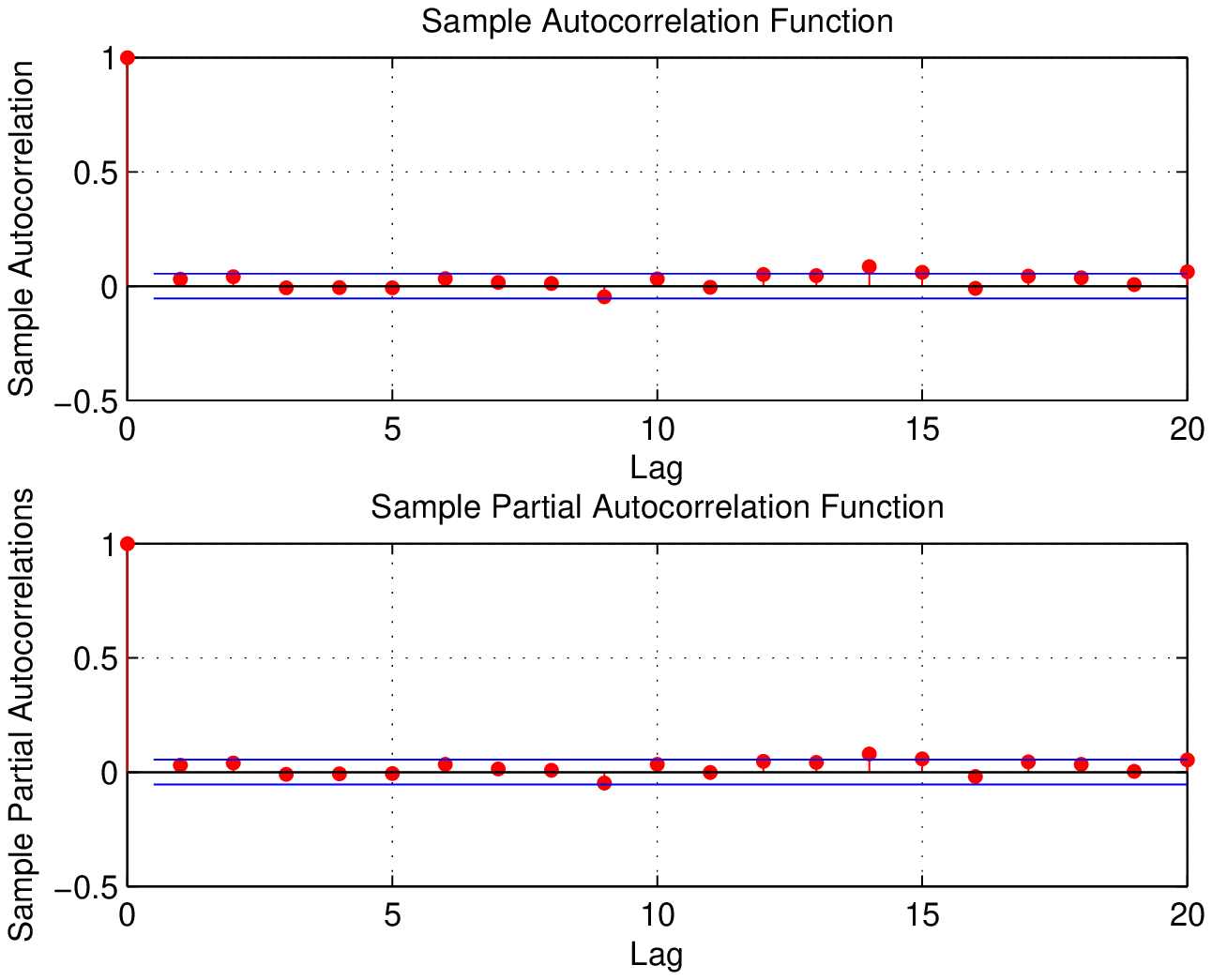}
\includegraphics[scale=.3]{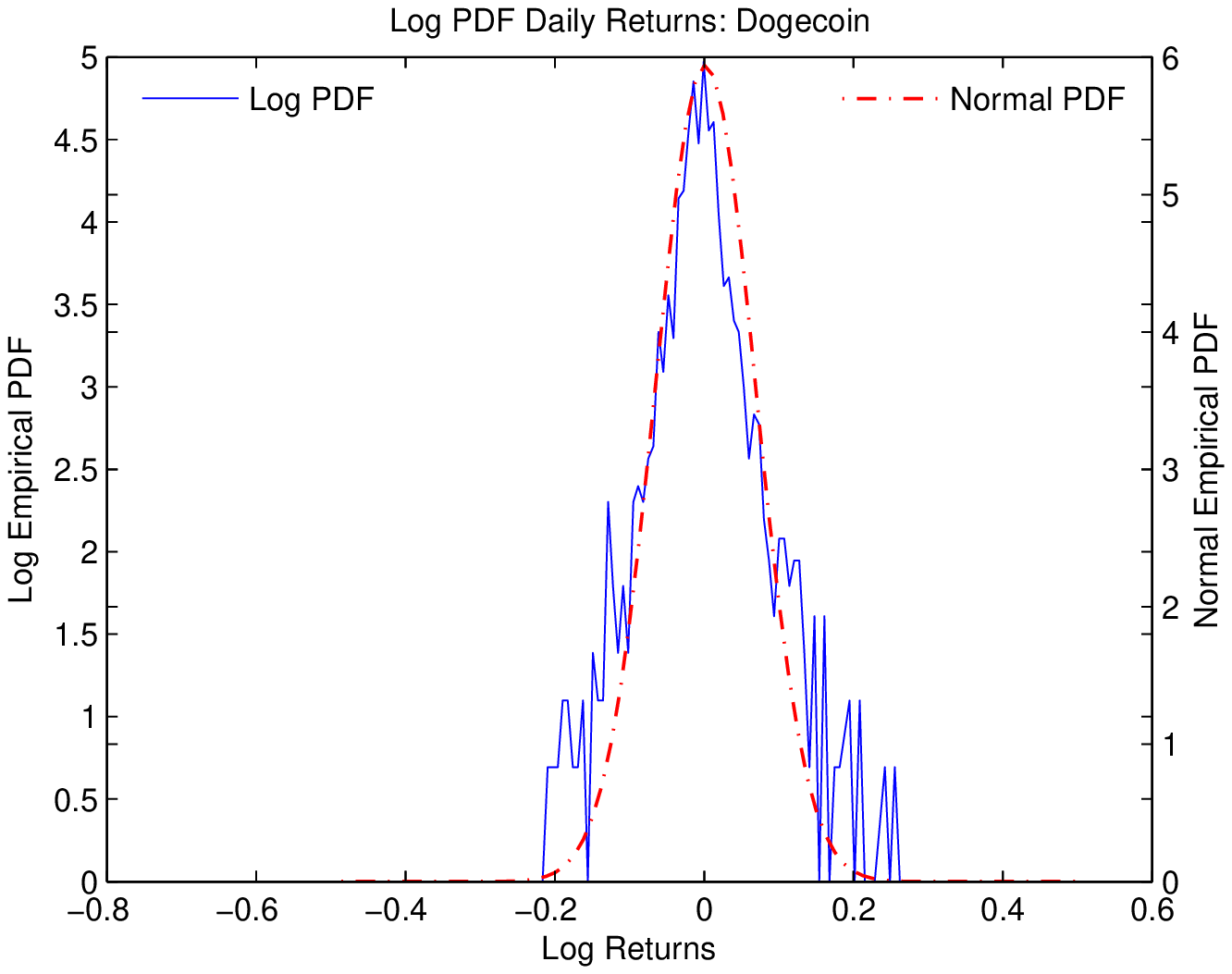}
\includegraphics[scale=.3]{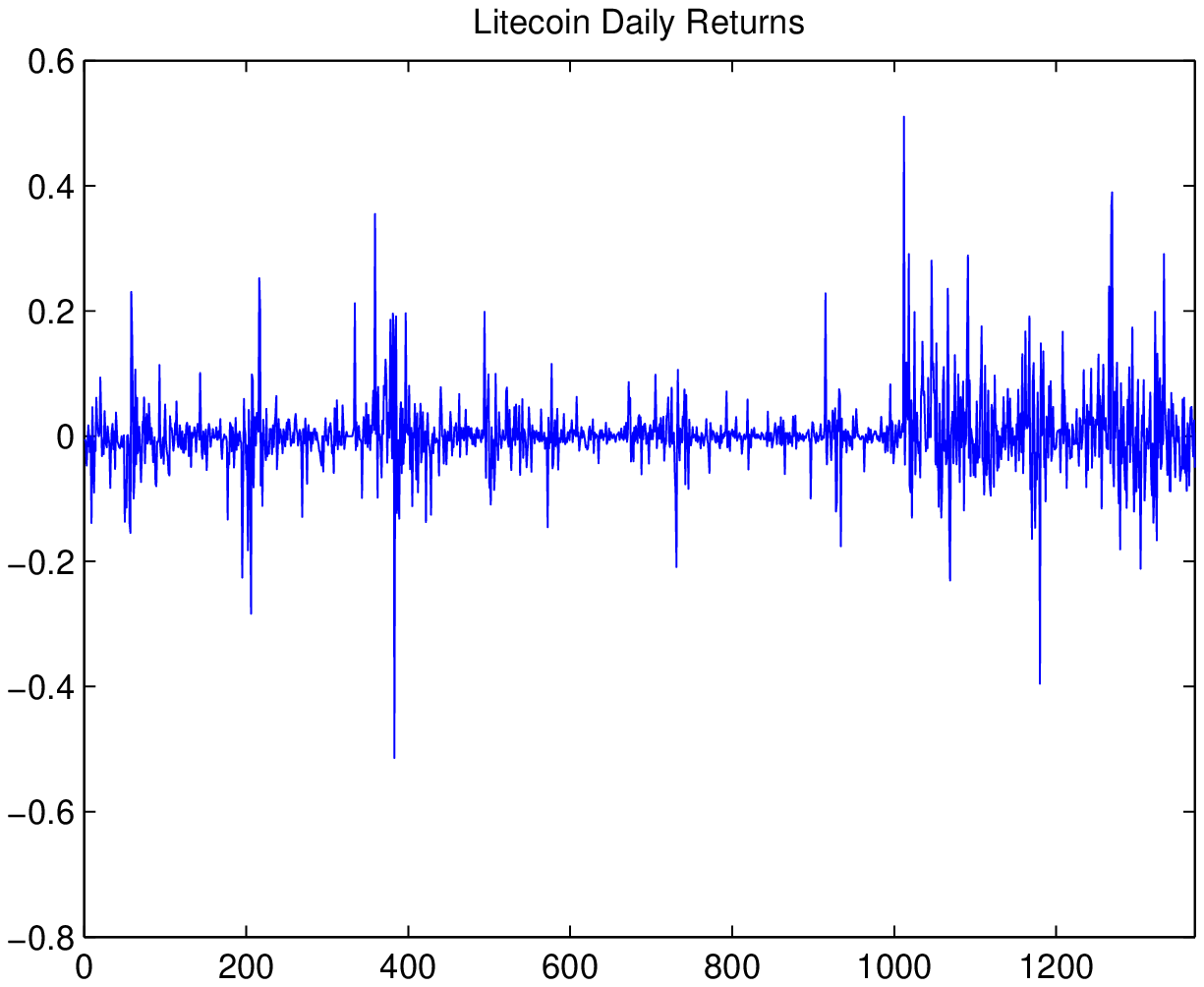}
\includegraphics[scale=.3]{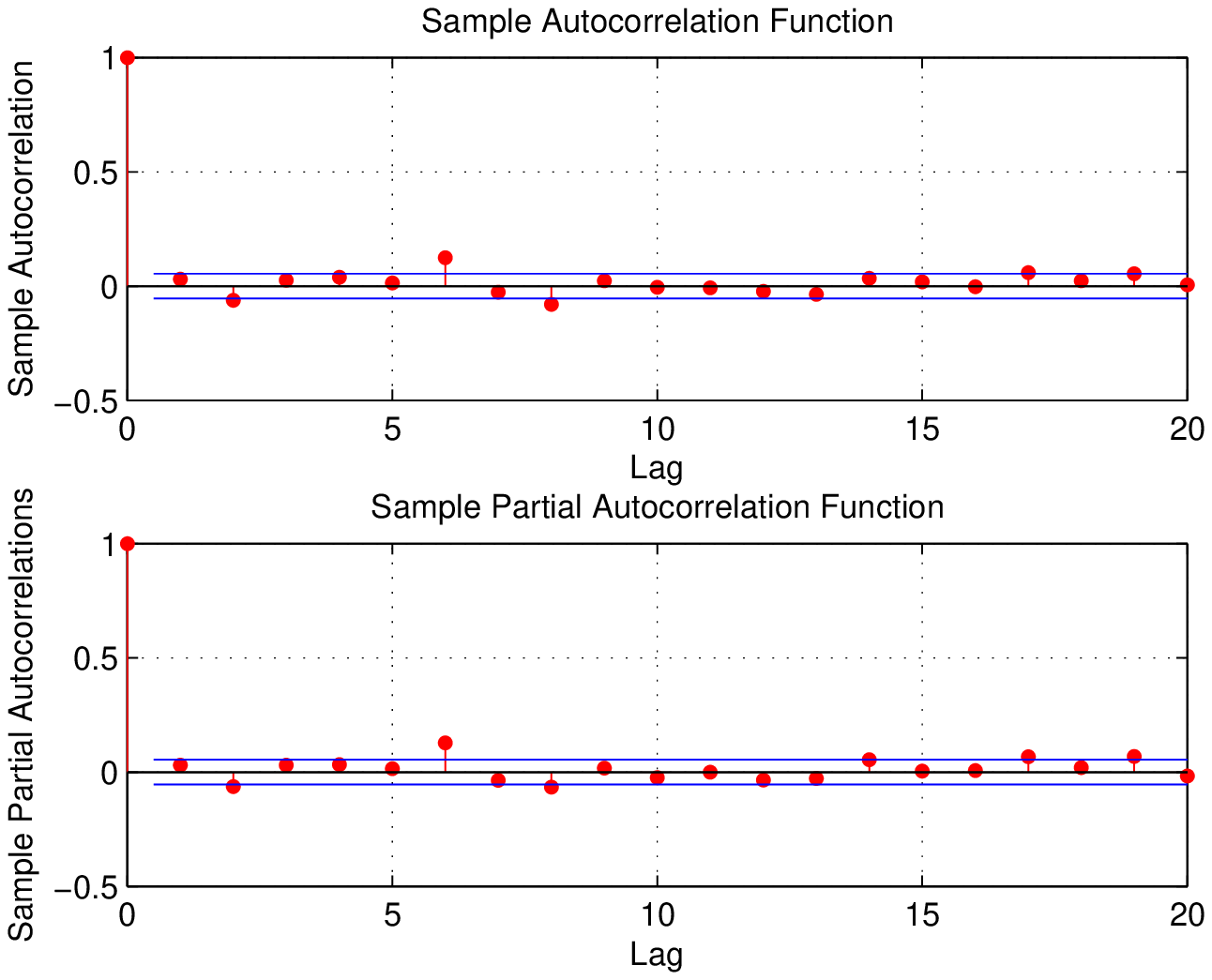}
\includegraphics[scale=.3]{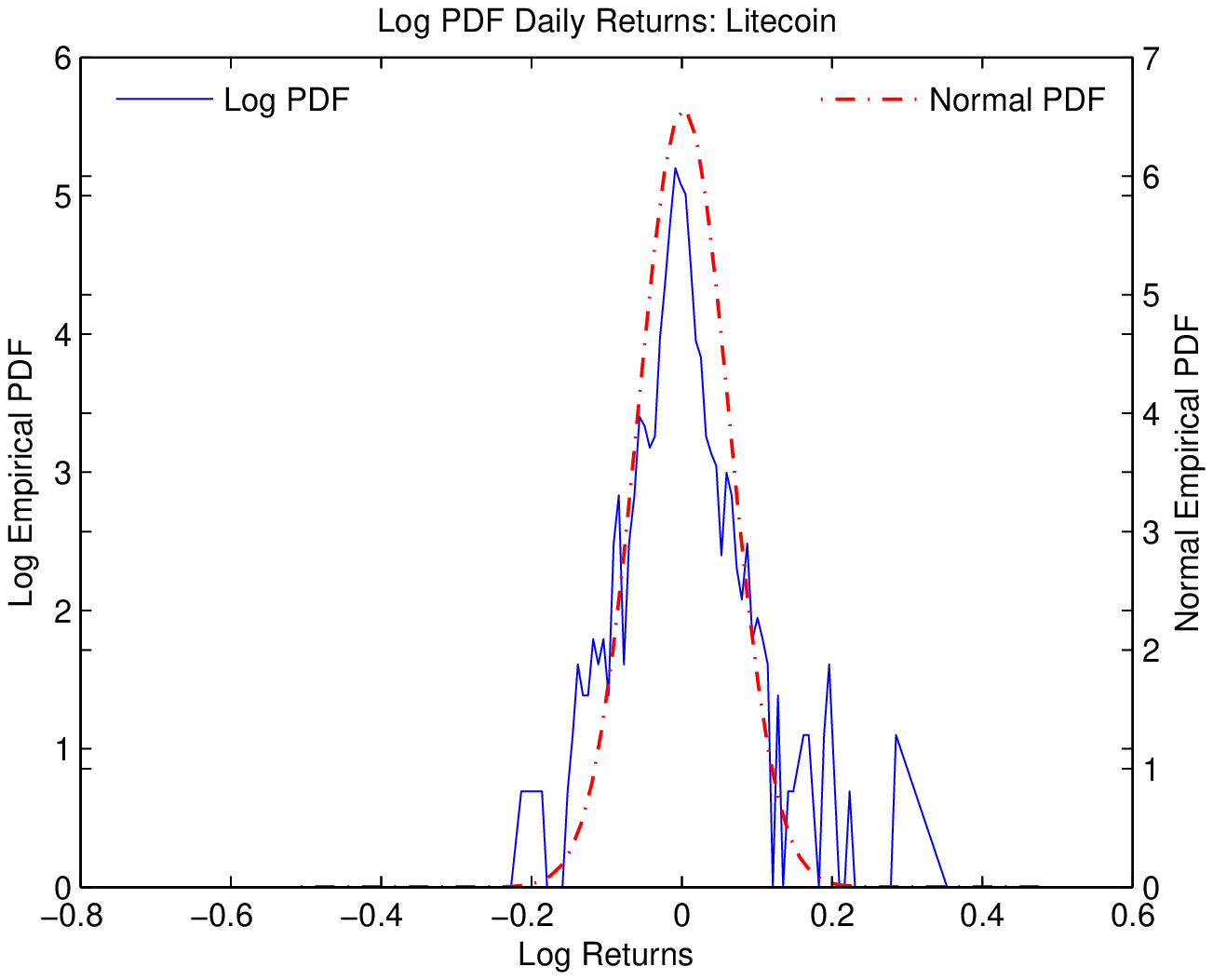}

\label{fig:r}
\end{figure}

\begin{figure}[htbp]
 \subsection{Daily log returns, ACF \& PCF, kurtosis in logarithmic returns - Maidsafecoin, Monero, and Ripple}
 \caption{(Data for Maidsafecoin, Monero, and Ripple from June 22, 2014 - March 24, 2018. (Left) Graphs of the daily log returns of the exchange rates. (Middle) Sample autocorrelation function (ACF) and partial autocorrelation function (PACF) for the return series. (Right) The solid blue line is the empirical Pdf of log returns. The dotted red line is the normal Pdf. Initial signs indicate presence of volatility clustering and kurtosis in logarithmic returns.} 
\centering
\includegraphics[scale=.3]{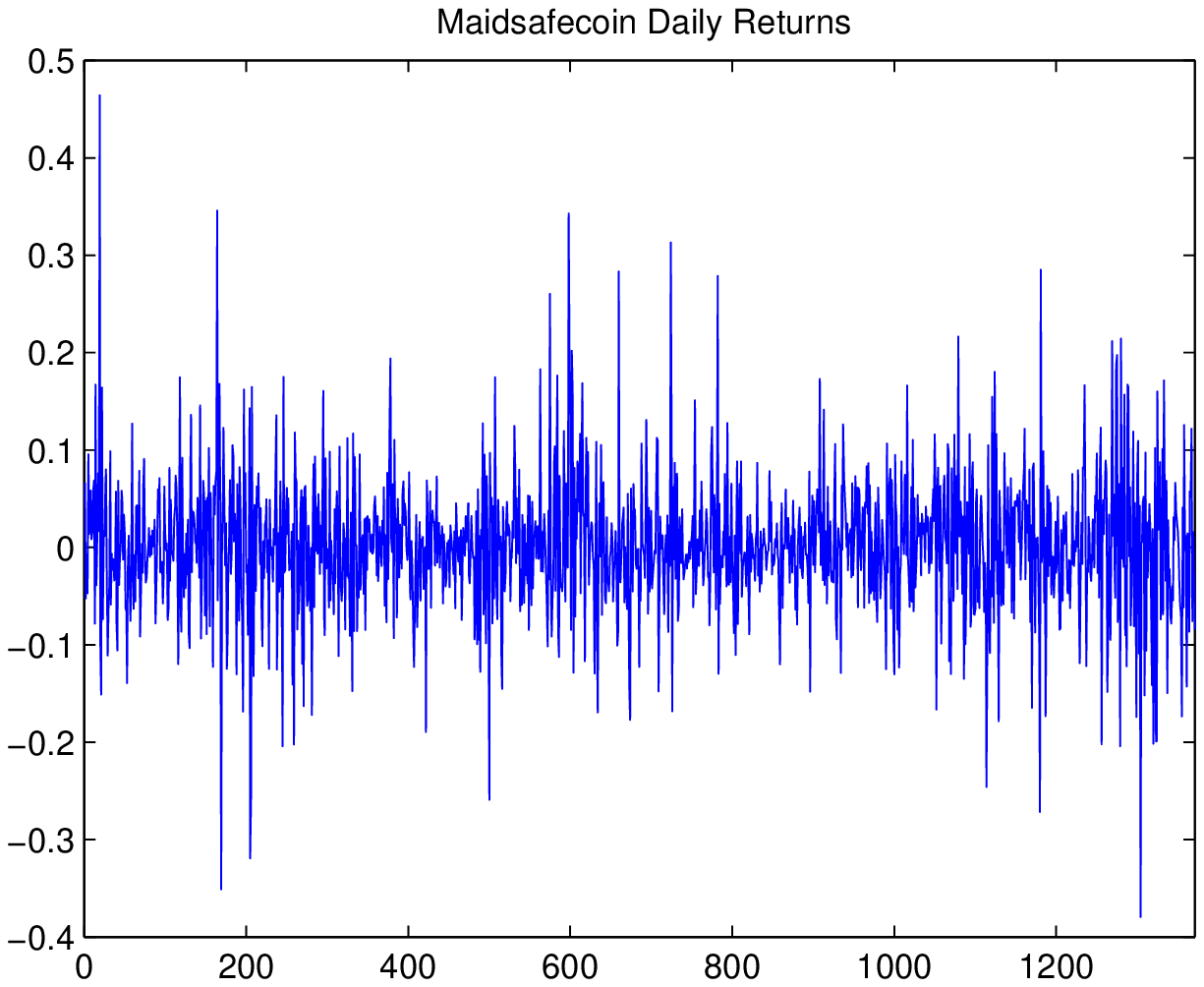}
\includegraphics[scale=.3]{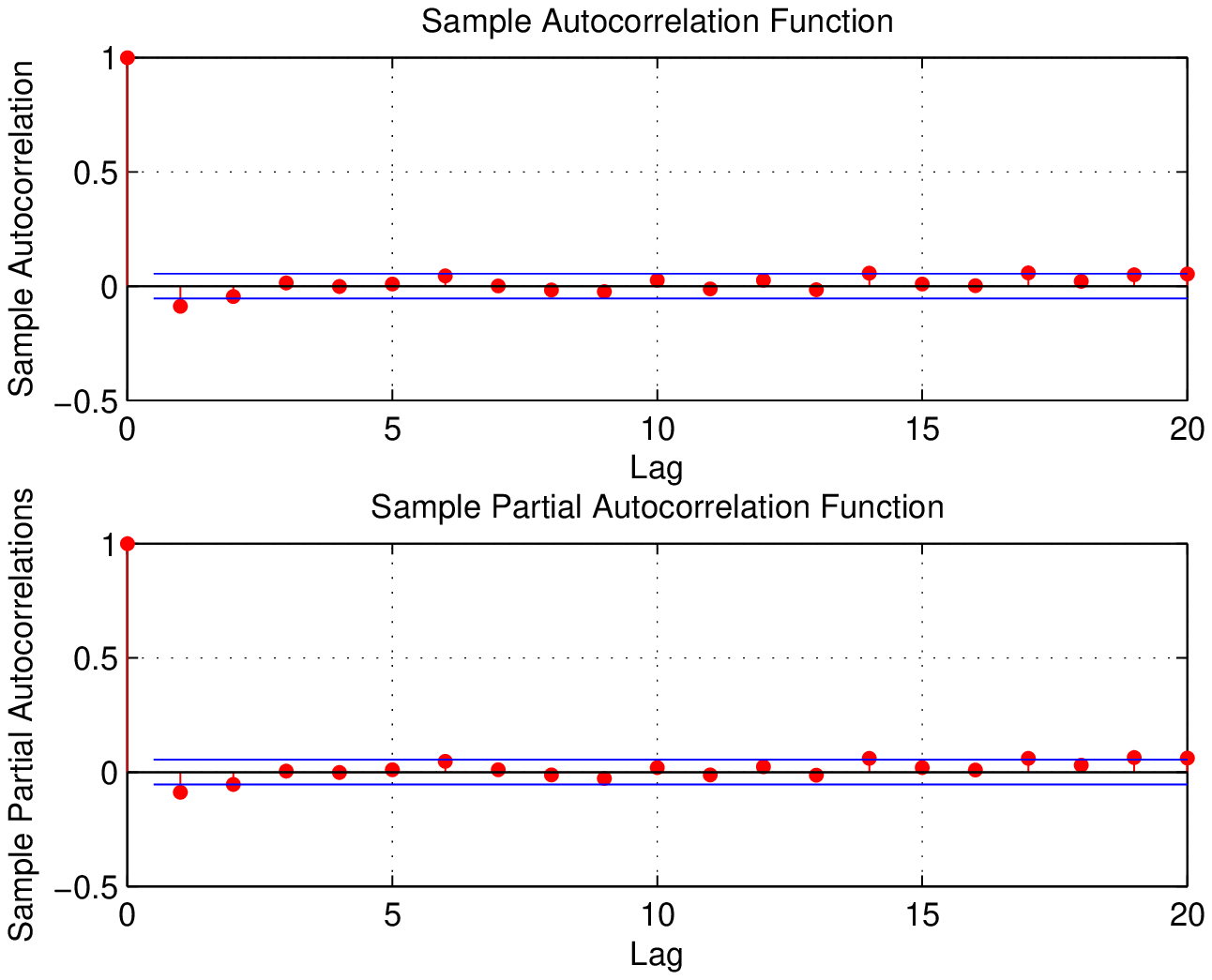}
\includegraphics[scale=.3]{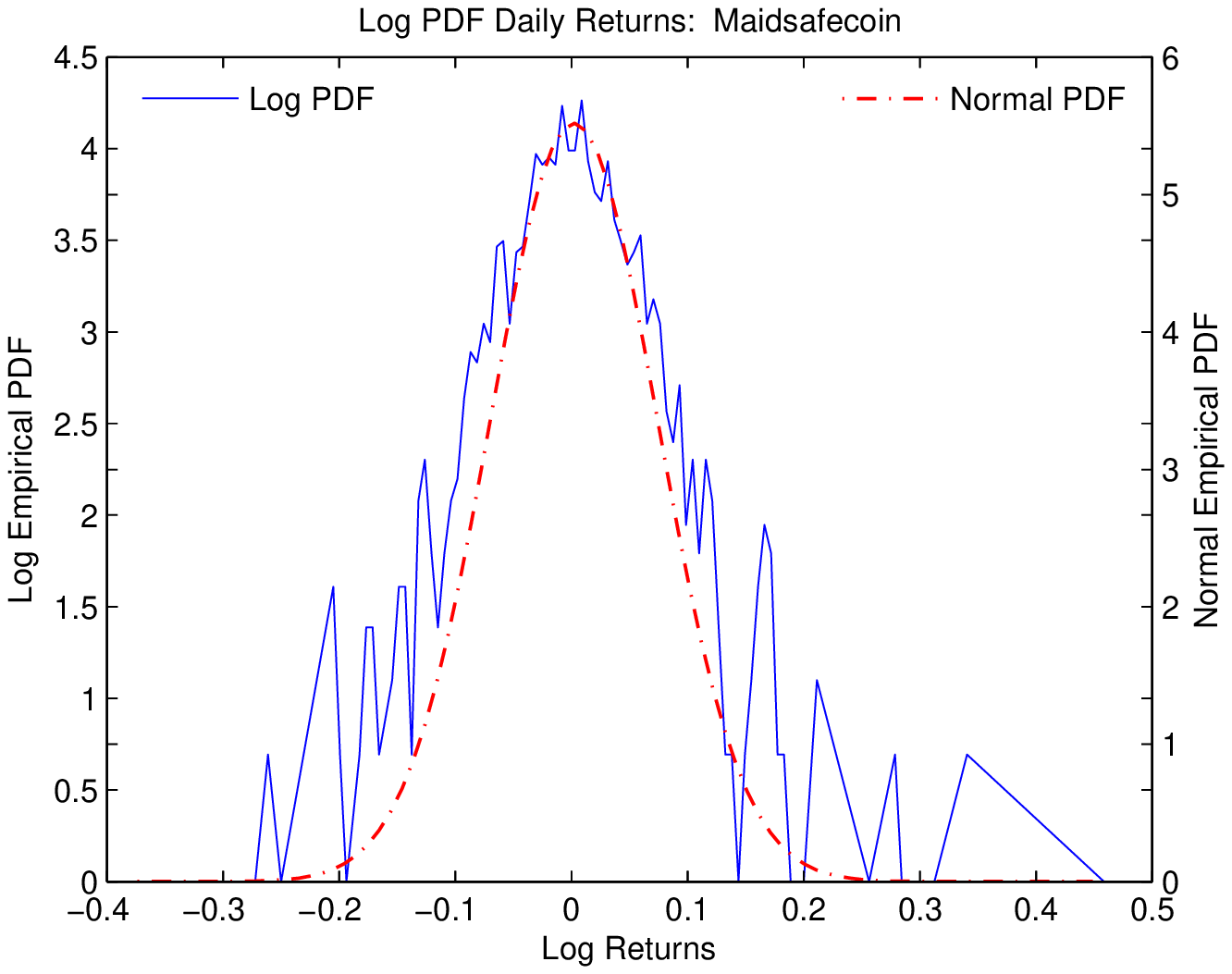}
\includegraphics[scale=.3]{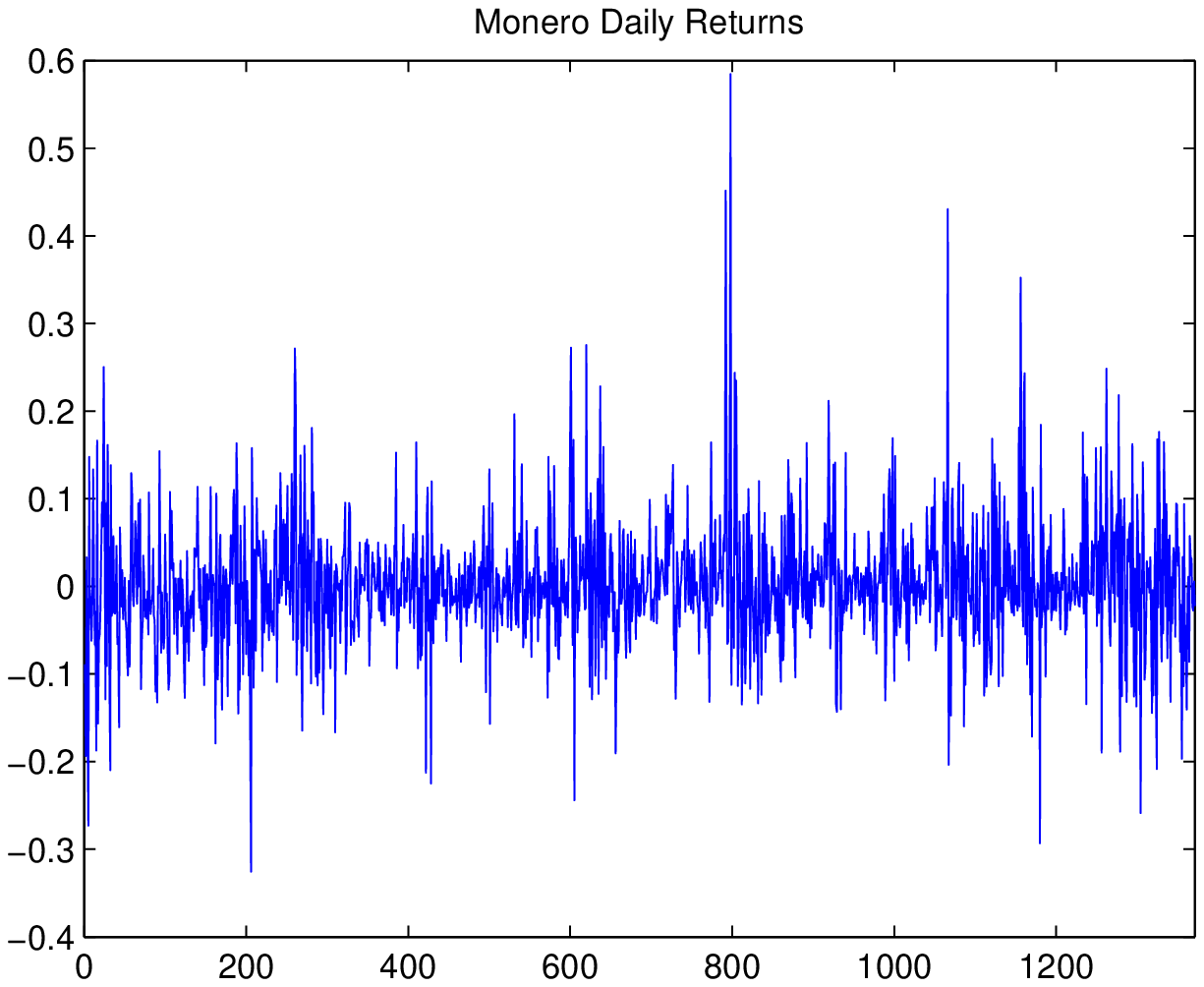}
\includegraphics[scale=.3]{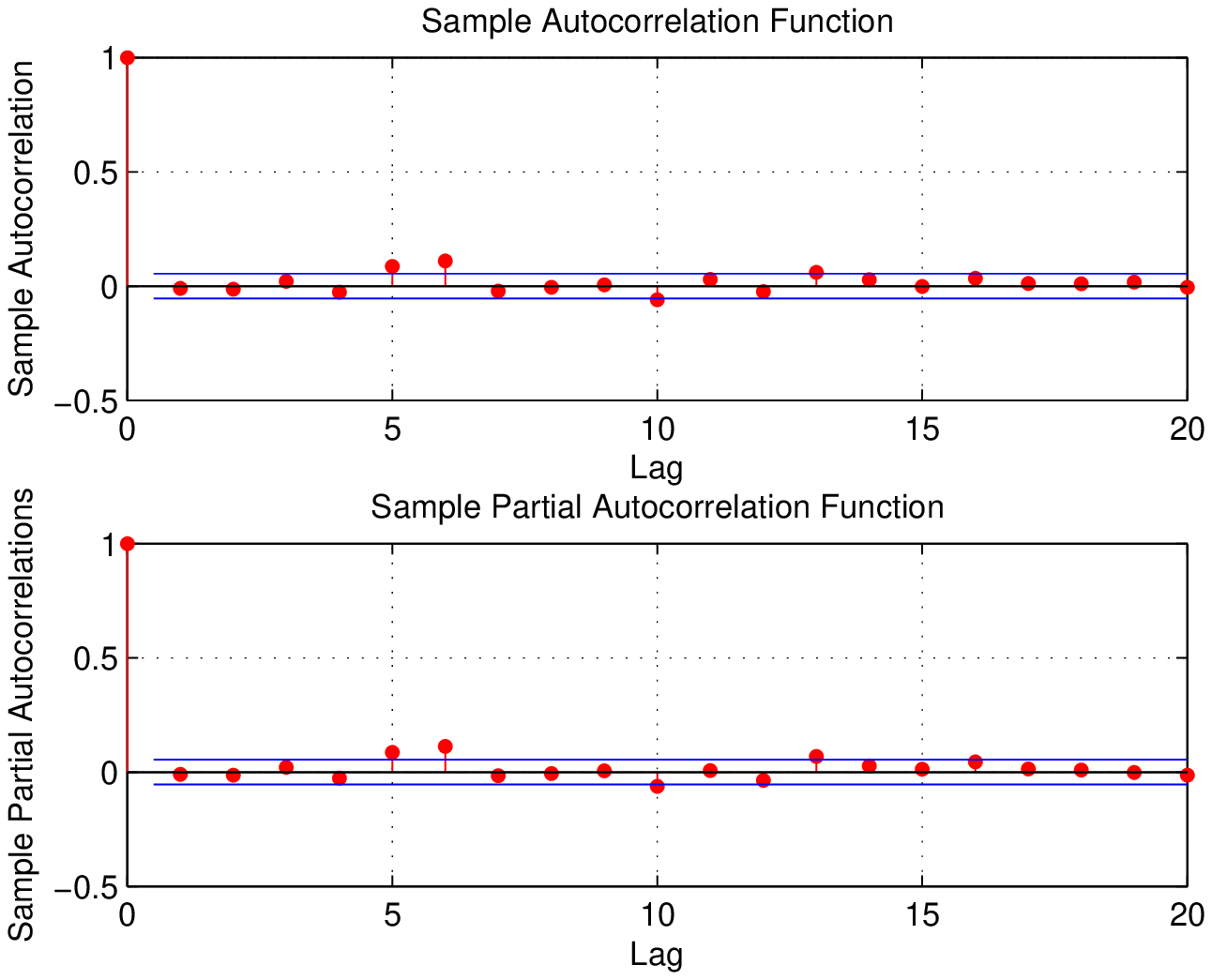}
\includegraphics[scale=.3]{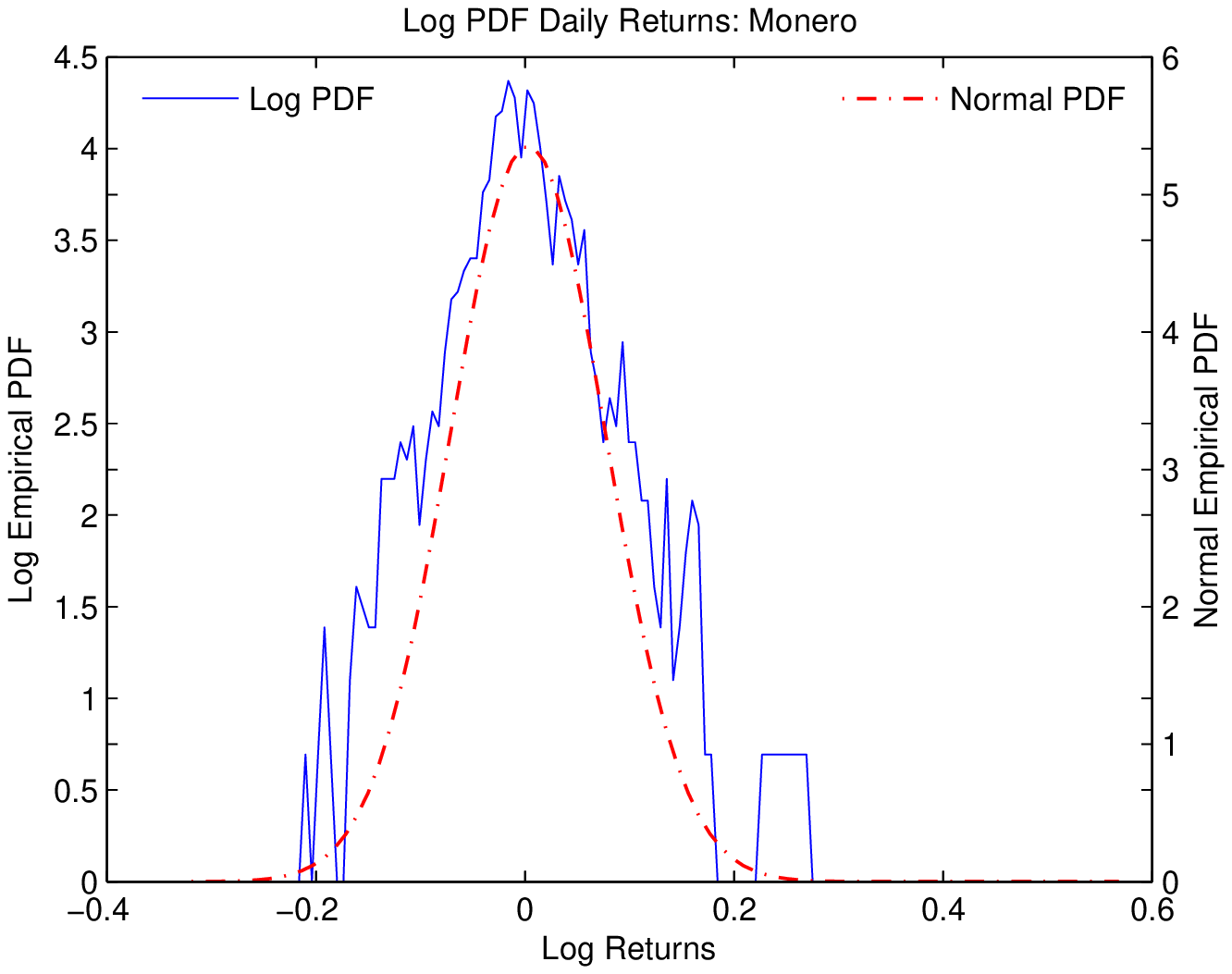}
\includegraphics[scale=.3]{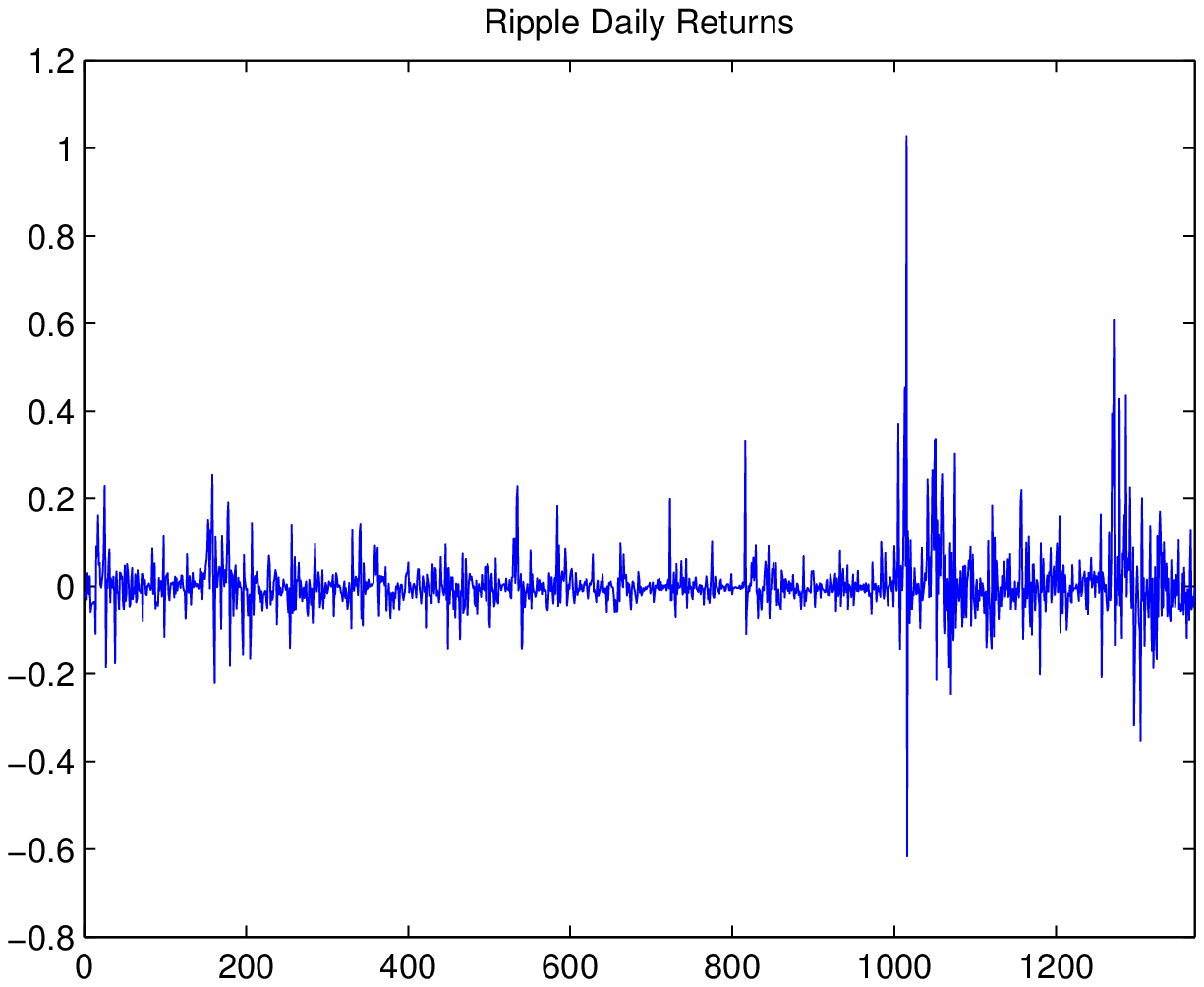}
\includegraphics[scale=.3]{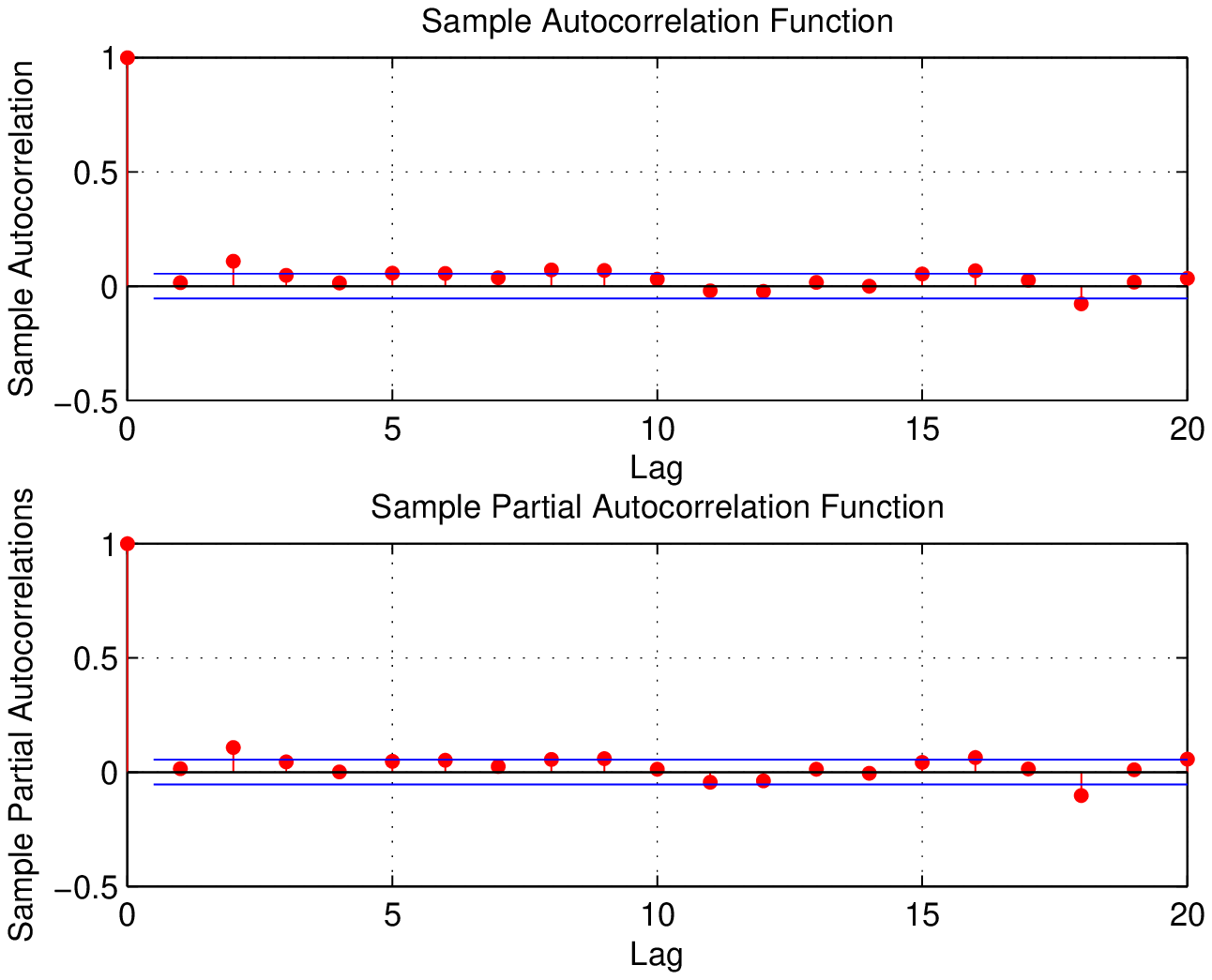}
\includegraphics[scale=.3]{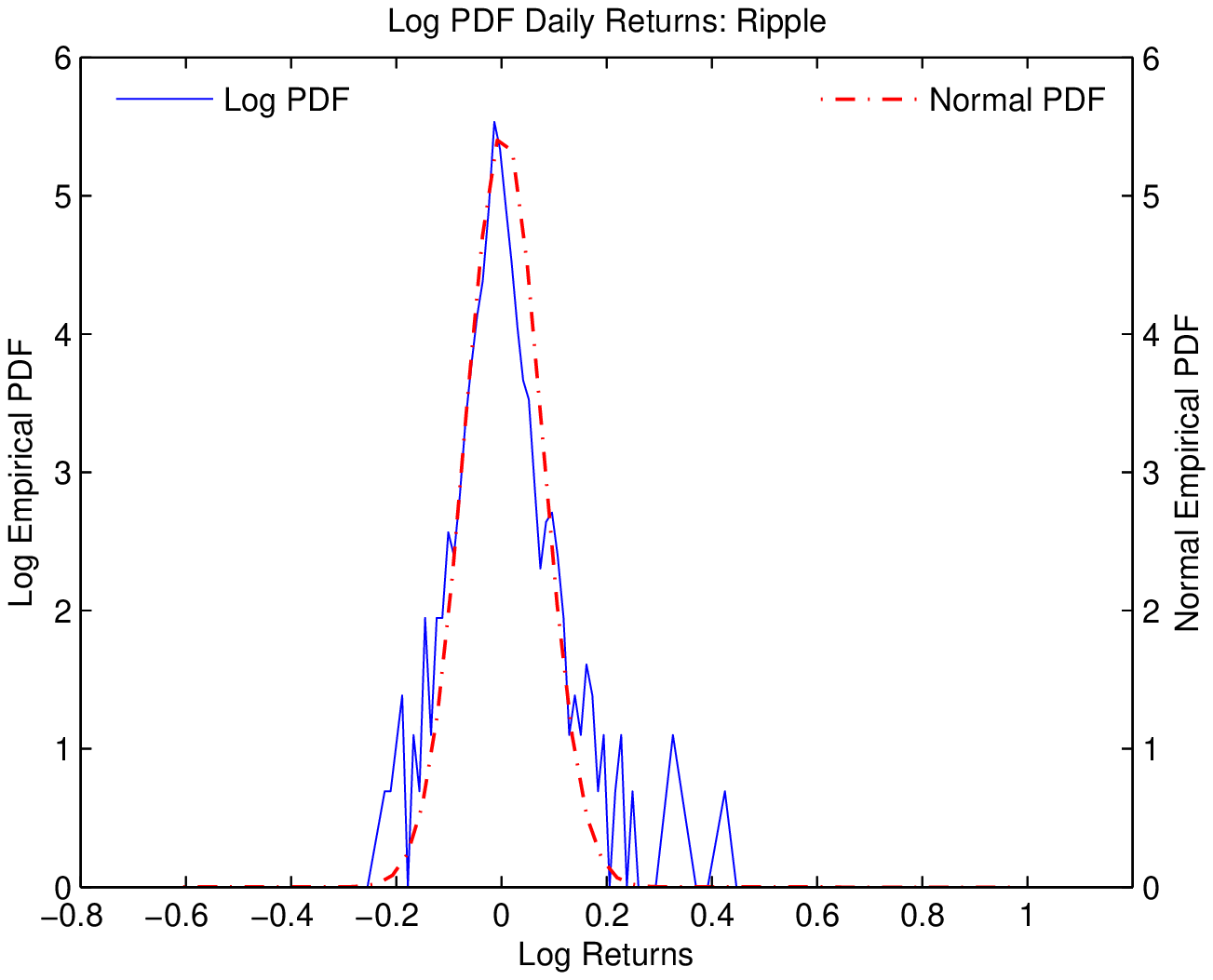}
\label{fig:acf}
\end{figure}

\newpage

\section{Goodness of fit}
This section demonstrates that, contrary to \cite{Chu17}, there is a strong empirical argument against modelling innovations under Gaussian assumptions. Further, arguing against \cite{Chu17}, one can also demonstrate a theoretical case for not relying on Skewed GED (SGED) assumptions, but using GED innovations instead. 

\subsection{Kolmogorov-Smirnov and Cramer von-Mises criteria}
Next, we outline the theoretical framework behind the goodness of fit tests, which are used for diagnosing the distribution of GARCH innovations. Many tests for normality exist in the literature. For example, the Jarque-Bera test is based on symmetry and kurtosis. The chi-square test is also widely used for distributional assumptions. However, these test have well-known issues. Jarque-Bera test tends to overreject the null hypothesis of normality in the presence of long memory in the series. Further, i.i.d. is the usual assumption in most of such tests. 

One alternative is to use the Kolmogorov-Smirnov type tests. However, it is  difficult to apply the Kolmogorov test in the presence of estimated parameters, particularly for multivariate data where the number of estimated parameters is large. If estimated parameters are ignored, the inference will be invalid. The method proposed by \cite{Bai03} addresses this problem by combining a Kolmogorov-Smirnov type test of conditional distribution specifications for time series with Khmaladze's K-transformation, as in \cite{K81}. The K-transformation takes the empirical distribution of pseudo-observations and maps it to a process $W$, that is asymptotically Brownian. Using an amended Bai method, outlined in \cite{Remi13}, we now check whether innovations are Gaussian \footnote{ These checks are not carried out in \cite{Chu17}}. To this end, two appropriate tests are Kolmogorov-Smirnov and Cramer-von-Mises criteria. Both tests are used since KS is a general test and can be under-powered.

\noindent The distribution function $F$ of the Kolmogorov-Smirnov criterion is 
\[  F(x)=\frac{4}{\pi} \sum^\infty_{k=1} \frac{(-1)^k}{2k+1}e^{-(2k+1)^2\pi^2/(8x^2)}.\]
The distribution of the Cramer von-Mises criterion is the same as
\[  \frac{4}{\pi^2} \sum^\infty_{k=1} \frac{Z^2_k}{2k+1},\]
where $Z_1,Z_2,\ldots$ are $i.i.d.$ standard Gaussian.

\begin{table}[!htbp]
\centering
\caption{Quantiles for Kolmogorov-Smirnov and Cramer von-Mises criteria.}
\label{Table1}
\begin{tabular}{|l|l|l|}
\hline	
 Confidence level    & Kolmogorov-Smirnov  &  Cramer von-Mises   \\ \hline
90\% & 	1.96			   &  1.2				 \\
95\% & 	2.241			   &  1.657				 \\
99\% &	2.807			   &  2.8		 \\ \hline		 
\end{tabular}
\end{table}

\subsection{Khmaladze's martingale transformation }
The argument is presented as follows. Define $e_i$, which are GARCH(p,q) pseudo-observations, as $e_i=\frac{x_i-\hat{\mu}_i}{\hat{\sigma}_i}$, $i\in \{1,\ldots,n\}$. Let $u_i=N(e_i)$, $i\in \{1,\ldots,n\}$. The associated order statistics are $v_1,\ldots,v_n$. Set conditions $v_0=0$, $v_{n+1}=1$. Following \cite{Bai03}, we define
\[  \dot{g}(s)=\begin{bmatrix} 1 \\
-N^{-1}(s) \\
[-N^{-1}(s)]^2
\end{bmatrix},
\textnormal{~and~}
C(s)=\int^1_0 \dot{g}(t)  \dot{g}^{\top}(t) dt, \quad s \in (0,1).
\]
If $a=-N^{-1}(s)$ and $$x=-N^{'}(a)=\frac{e^{-a^2/2}}{\sqrt[]{2\pi}},$$ then 
\[ C(s)=
\begin{bmatrix}
1-s & -x & -ax \\
-x &  1-s+ax  & x(1+a^2)  \\
-ax & x(1+a^2) & 2(1-s)+ax(1+a^2) 
\end{bmatrix} \]
for all $s\in (0,1)$. Then define $V_n(s)=\frac{1}{\sqrt[]{n}} \sum^n_{i=1} \{I(v_i \leq s) -s\}$. The K-transform of $V_n$ is 
\[ 
W_n(s)=V_n(s)-\int^s_0 \left\{\dot{g}^\top(t) C^{-1} (t)  \int^1_t \dot{g}(\tau) dV_n(\tau)\right\} dt, \quad s \in [0,1]
\]
$W_n$ is approximately Brownian under the null of innovations being standard Gaussian $N(0,1)$. Then, \cite{Bai03} suggests to approximate $C(v_j)$ by 
\[ 
\sum^n_{k=j} (v_{k+1}-v_k) \dot{g}(v_k) \dot{g}^\top(v_k),
\]
and this can be calculated exactly for Gaussian innovations. On the other hand, computing 
\begin{equation}
\label{quadra}
\int^{v_k}_{v_{k-1}} C^{-1} (t)\dot{g}(t) dt
\end{equation}
is difficult, which is perhaps the reason why \cite{Chu17} avoid this methodology. \cite{Bai03} suggests to approximate it by
\[C^{-1}(v_{k-1}) \int^{v_k}_{v_{k-1}} \dot{g}(t) dt= C^{-1}(v_k) \{\{g(v_k)-g(v_{k-1})\}.
\]
Instead, we follow the method proposed by \cite{Remi13} and estimate \eqref{quadra} using Gauss-Kronrod quadrature, such that 
\[
\int^{v_k}_{v_{k-1}} C^{-1} (t)\dot{g}(t) dt \approx (v_k-v_{k-1}) C^{-1} 
\Big( \frac{v_{k-1}+v_k}{2} \dot{g} \frac{v_{k-1}+v_k}{2}
\Big).\]
Only then the Kolmogorov-Smirnov and Cramer von-Mises test criteria are used to assess goodness of fit.
\[ \textrm{Kolmogorov-Smirnov} := ~KS= \max_{j\in \{1,\ldots,n \}} |W_n(v_j)| , ~~ 
\textrm{Cramer von-Mises} := ~ CvM= \frac{1}{n} \sum_{j=1}^n W^2_n (v_j) (v_{j+1}-v_j) \] 

\subsection{Generalised Error Distribution}
The Generalised Error Distribution (GED) is an alternative to the Gaussian, which has some attractive properties that are naturally amenable to modelling innovations in our context. A r.v. $X$  of parameter $\nu>0$, is $X \sim GED(\nu)$ if its density is 
\[ 
f_v(x)= \frac{1}{b_v 2^{1+1/v}\Gamma \Big(1+1+\frac{1}{\nu}\Big)} e^{-\frac{1}{2}\frac{|x|}{b_\nu}}, \quad x \in \R,
\textnormal{~~where~~}
b_\nu=2^{-\frac{1}{\nu}}\sqrt[]{\frac{\Gamma\Big( \frac{1}{v} \Big)}{\Gamma\Big( \frac{3}{v} \Big)}  }.
\]
Let $F_\alpha$ be the distributed gamma with parameters $\alpha=\beta=1$. Then $F$ of $X\sim GED (\nu)$, and its inverse $F^{-1}$, are 
\begin{equation*}
F(x)=\left\{
                \begin{array}{ll}
                 \frac{1}{2}- \frac{1}{2} F_{1/\nu}\Big( \frac{1}{2} \Big(\frac{|x|}{b_\nu} \Big)^\nu \Big) & \quad  x\leq 0\\
                 \frac{1}{2}+ \frac{1}{2} F_{1/\nu}\Big( \frac{1}{2} \Big(\frac{x}{b_\nu} \Big)^\nu \Big)              & \quad  x>0
                \end{array}
              \right.
\textnormal{,~~and~}
F^{-1}(x)=\left\{
                \begin{array}{ll}
                 \frac{1}{2}- \frac{1}{2} F_{1/\nu}\Big( \frac{1}{2} \Big(\frac{|x|}{b_\nu} \Big)^\nu \Big) & \quad  0<u\leq \frac{1}{2}\\
                 \frac{1}{2}+ \frac{1}{2} F_{1/\nu}\Big( \frac{1}{2} \Big(\frac{x}{b_\nu} \Big)^\nu \Big)              & \quad  \frac{1}{2}\leq u < 1.
                \end{array}
              \right.
\end{equation*}

\subsubsection{Statistical limitations of SGED vs GED: when mgf fails to exist}
This section presents two key results, which are offered in support of arguments presented in the introduction section of this paper. First, it is shown why the moment generating function (mgf) of a Generalized Error Distribution (GED) exists when $v\geq 1$ and fails to exist when $0<v<1$. Second, it is shown why the mgf of the Skewed GED (SGED) fails to exist for any $k\neq 0$, an important set of conditions for estimation. These arguments proceed as follows. 

Let $M(t)=Ee^{tX},-\infty<t<\infty$ denote a mgf. The pdf of GED is
\[f_v(x)=\frac{v ~exp (-\frac{1}{2}|\frac{x}{\lambda}| v }{\lambda 2 ^{1+1/v} \Gamma\Big( \frac{1}{v} \Big)}, \quad v>0, x \in \R   \]
where $\Gamma(\cdot)$ is the Gamma function, and 
\[\lambda= \Biggl[ 2^{-\frac{2}{v}}\frac{\Gamma\Big( \frac{1}{v} \Big)}{\Gamma\Big( \frac{3}{v} \Big)}  \Biggr]^{\frac{1}{2}}.
\]
The pdf of SGED is
\[g(x)= \frac{exp \Big(-\frac{1}{2} \Big(-\frac{1}{k} \ln\Big(1-\frac{k(x-\eta)}{\alpha} \Big)\Big)^2\Big)}{\sqrt[]{2\pi}\alpha \Big(1-\frac{k(x-\eta)}{\alpha}\Big)}\]
where $x \in \Big( -\infty, \eta+\frac{\alpha}{k}\Big)$ if $k>0$, $x\in \Big(\eta+\frac{\alpha}{k}, \infty\Big)$ if $k<0$. Further, $\eta$ and $\alpha$ are a real constant and a positive constant respectively. When $k \to 0$, $g(\cdot)$ reduces to the pdf of a random normal variable (r.v.) with mean $\eta$ and variance $\alpha^2$.

\begin{lemma}
\label{GED1} Let a r.v. $X$ be distributed GED s.t. $v>0$. Then the moment-generating function M(t) exists $\forall$ $t$, when $v>1$; $\exists$ in the region $(-\sqrt[]{2},\sqrt[]{2})$ when $v=1$; and does not exist $\forall$ $t>0$ when $0<v<1.$
\end{lemma}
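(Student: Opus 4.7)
The plan is to evaluate the MGF integral
\[
M(t) \;=\; \int_{-\infty}^{\infty} e^{tx} f_\nu(x)\, dx \;=\; \frac{\nu}{\lambda\, 2^{1+1/\nu}\Gamma(1/\nu)}\int_{-\infty}^{\infty} \exp\!\left(tx - \tfrac{1}{2}\bigl|x/\lambda\bigr|^{\nu}\right) dx,
\]
by doing an asymptotic analysis of the exponent $h(x) := tx - \tfrac{1}{2}|x/\lambda|^\nu$ as $|x|\to\infty$, splitting into the three regimes $\nu>1$, $\nu=1$, and $0<\nu<1$. The normalization constant is positive and finite, so convergence of $M(t)$ is equivalent to convergence of the integral of $e^{h(x)}$.

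For $\nu>1$, I would observe that for any fixed $t\in\R$, $|x/\lambda|^\nu$ eventually dominates $|tx|$, so $h(x)\to-\infty$ faster than any linear function; in fact $e^{h(x)}$ is integrable at $\pm\infty$ by direct comparison with $\exp\!\bigl(-\tfrac{1}{4}|x/\lambda|^\nu\bigr)$ for $|x|$ large. Hence $M(t)$ exists for every $t$. For $0<\nu<1$ and $t>0$, I would note that on $x\to+\infty$ we have $|x/\lambda|^\nu = o(x)$, so $h(x) = tx(1+o(1))\to +\infty$, and the integrand is not even bounded; thus the integral diverges and $M(t)$ fails to exist. (The symmetric argument handles $t<0$, though the lemma only claims $t>0$.)

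The delicate step is the boundary case $\nu=1$, which I expect to be the main obstacle because it requires pinning down the exact symmetric interval $(-\sqrt 2,\sqrt 2)$ rather than a qualitative statement. Here the density reduces to a scaled Laplace law, $f_1(x)\propto e^{-|x|/(2\lambda)}$, so
\[
M(t) \;=\; \text{const}\cdot\int_{-\infty}^{\infty} e^{tx - |x|/(2\lambda)}\, dx,
\]
which converges iff $|t|<1/(2\lambda)$. To get the stated interval I would compute $\lambda$ at $\nu=1$ from the formula in the excerpt: $\lambda = 2^{-1}\sqrt{\Gamma(1)/\Gamma(3)} = \tfrac{1}{2}\sqrt{1/2} = 1/(2\sqrt 2)$, hence $1/(2\lambda) = \sqrt 2$.

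Combining the three cases gives exactly the trichotomy claimed. I would close by noting that the same exponent-comparison argument shows that in the divergent regime $M(t)=+\infty$ rather than being merely undefined, which will be useful for the subsequent discussion of the SGED case in the companion lemma.
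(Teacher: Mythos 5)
Your proposal is correct and follows essentially the same route as the paper's proof: a comparison/asymptotic analysis of the exponent $tx-\tfrac{1}{2}|x/\lambda|^{\nu}$ in each of the three regimes, with the boundary case $\nu=1$ handled by reducing to a Laplace density and computing $\lambda=1/(2\sqrt{2})$ so that the convergence threshold is $1/(2\lambda)=\sqrt{2}$. The only differences are cosmetic: the paper evaluates the $\nu=1$ integral in closed form as $(1-t^{2}/2)^{-1}$ and appends a Maclaurin expansion of the even moments in the case $\nu>1$, neither of which is needed for the existence claim itself.
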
 

\begin{proof}
\label{GED1p} Let us take any $v>0$ so that $M(t)=c_1 \int^\infty_{-\infty} e^{tx} e^{-c_2|x|^v}dx, \quad -\infty<t<\infty$, where $c_1=v \Big(\lambda 2^{1+1/v} \Gamma\Big(\frac{1}{v}\Big)\Big)^{-1}$ and $c_2=(2 \lambda ^v)^{-1}$. Suppose that $0<v<1.$ Then $\forall x>0$
\[e^{tx} e^{-c_2|x|^v}= e^{tx\Big(1- \frac{c_2}{tx^{1-v}}  \Big)}.\]
Let $x_0>0$ be such that $\frac{c_2}{tx^{1-v}}<\frac{1}{2}$ $\forall$ $x \geq x_0$ so that 
\[e^{tx \Big(1- \frac{c_2}{tx^{1-v}}  \Big)} \geq e^{tx/2} \quad \forall x \geq x_0. \]
It follows that 
\[ \int^\infty_{-\infty} e^{tx} e^{-c_2|x|^v} dx=\infty. \]
Therefore $M(t)$ $\nexists$ $\forall$ $t>0$ when $0<v<1$. When $v=1$, the pdf $f(x)=\frac{1}{\sqrt[]{2}}e^{-\sqrt[]{2}|x|}$, $-\infty<x<\infty.$ Let 
\[ M(t)= \frac{1}{\sqrt[]{2}} \int^\infty_{-\infty} e^{tx-\sqrt[]{2}|x|}  dx = \frac{1}{\sqrt[]{2}}(I_1+I_2),\]
where 
\[
I_1=\int^0_{-\infty} e^{tx+\sqrt[]{2}x} dx 
\quad \textrm{and} \quad 
I_2=\int^{-\infty}_{0} e^{tx-\sqrt[]{2}x} dx.\]
Letting $x=-y$ yields
\[
I_1=\frac{1}{t+\sqrt[]{2}} 
\quad \textrm{and} \quad 
I_2=\frac{1}{t-\sqrt[]{2}}
\]
Then,
\[M(t)=\frac{1}{\sqrt[]{2}} 
\Big(
\frac{1}{\sqrt[]{2}+t} +  \frac{1}{\sqrt[]{2}-t}
\Big)=2\Big(\Big(2-t^2\Big)\Big)^{-1}=\Big(1-\frac{t^2}{2} \Big)^{-1}
\]
for any $|t|<\sqrt[]{2}$. Further, $\nexists$ $M(t)$ when $|t|\geq \sqrt[]{2}$.

Let $v>1$. Then define
\begin{eqnarray*}
M(t)&=&c_1 \int^\infty_{-\infty} e^{tx-c_2|x|^v}dx, \quad -\infty<t<\infty  \\
&=& c_1 \left\{ \int^0_{-\infty} e^{tx-c_2|x|^v}dx + \int^\infty_{0} e^{tx-c_2|x|^v}dx
\right\} \\
&=& c_1 (I_1+I_2).
\end{eqnarray*}
Since $tx-c_2x^v
=-c_2x\Big( 1-\frac{t}{c_2x^{v-1}} \Big)
=-c_2x^v\Big( 1+o(1) \Big)$, it is evident that as $x \to \infty$, $I_2<\infty$ $\forall$ $t_{t \in \R}$.
Let $x<0$. Then, 
\[tx-c_2|x|^v
=-c_2c_2|x|^v \Big(1-\frac{tx}{c_2|x|^v} \Big) 
=-c_2c_2|x|^v \Big(1-\frac{t}{c_2|x|^{v-1}} \Big) 
=-c_2c_2|x|^v\Big(1+o(1)\Big) \textrm{~as~} x\to - \infty
.\]
Therefore $I_1<\infty$ $\forall$ $t$. Then $M(t)$ exists for all $t\in (-\infty, \infty).$ Apply Maclaurin expansion and letting $M^k(0)=EX^k, k\geq 0$ yields
\[M(t)=\sum^\infty_{k=0}\frac{t^k}{k!} EX^k.  \]
By symmetry ($X$ is symmettric around 0), we get $EX^k=0$ for when $k$ is odd. For when $k$ is event, $k=2m$, 
\[EX^{2m}=\Big(\frac{\Gamma(\frac{1}{v})}{\Gamma(\frac{3}{v})} \Big)^m  
\frac{\Gamma(\frac{2m+1}{v})}{\Gamma(\frac{1}{v})}, \quad m \geq 1. \]
Therefore 
\[M(t)=\sum^\infty_{m=0} \frac{t^{2m}}{(2m)!}\Big(\frac{\Gamma(\frac{1}{v})}{\Gamma(\frac{3}{v})} \Big)^m  
\frac{\Gamma(\frac{2m+1}{v})}{\Gamma(\frac{1}{v})}.
\]
Hence the result. A closed form expression is not available for $M(T)$.  
\end{proof}

\begin{lemma}
\label{GED2} 
The mgf $M(t)$ of a SGED does not exist for any $t>0$ when $k<0$, and for any $t<0$ when $k>0$. 
\end{lemma}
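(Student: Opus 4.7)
The most transparent route is to exploit the fact that the SGED is built from a standard normal via an exponential transformation, so that the mgf reduces to a Gaussian integral of a doubly-exponential integrand. Concretely, I introduce the change of variables
\[
Y = -\frac{1}{k}\ln\!\Bigl(1-\frac{k(X-\eta)}{\alpha}\Bigr),
\qquad\text{equivalently}\qquad X = \eta + \frac{\alpha}{k}\bigl(1 - e^{-kY}\bigr).
\]
A short Jacobian computation, $|dy/dx| = \alpha^{-1}(1-k(x-\eta)/\alpha)^{-1}$, shows that the SGED density $g$ pushes forward exactly to the standard normal density $\varphi(y)=(2\pi)^{-1/2}e^{-y^2/2}$ on all of $\R$, for either sign of $k$. (This is essentially the defining property of the SGED.)

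Substituting, the mgf becomes
\[
M(t) = \E\!\bigl[e^{tX}\bigr]
= e^{t(\eta+\alpha/k)} \int_{-\infty}^{\infty}\exp\!\Bigl(-\tfrac{t\alpha}{k}\,e^{-ky}\Bigr)\,\varphi(y)\,dy .
\]
Integrability of the integrand at a given end of the real line is governed by the sign of $-t\alpha/k$ in combination with the sign of $-k$ appearing in the inner exponential. I then split into the two claimed cases:

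Case $k<0$, $t>0$: Here $-t\alpha/k = t\alpha/|k|>0$ and $-k=|k|>0$, so as $y\to+\infty$ the integrand behaves like $\exp\!\bigl(C\,e^{|k|y} - y^2/2\bigr)$ with $C>0$. The doubly-exponential growth swamps the Gaussian factor, so for any fixed $y_0$ large enough the integral over $[y_0,\infty)$ is infinite, and hence $M(t)=+\infty$. Case $k>0$, $t<0$: Now $-t\alpha/k = |t|\alpha/k>0$ and $-k<0$, so as $y\to-\infty$ we have $e^{-ky}\to\infty$, and the integrand again blows up as $\exp\!\bigl(C\,e^{k|y|}-y^2/2\bigr)$ with $C>0$. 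The same comparison yields divergence.

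In both cases the key quantitative estimate is the trivial bound $C e^{|k|\,|y|} - y^2/2 \ge \tfrac{1}{2}C e^{|k|\,|y|}$ for $|y|$ sufficiently large, which drives the divergence; I would spell this out as an explicit lower bound on $M(t)$ obtained by restricting the range of integration to a suitable half-line. The only real subtlety is keeping the signs straight when $k$ changes sign, since the support of $X$ also flips: one should verify that the transformation $y\mapsto x$ maps $\R$ onto the stated support $(-\infty,\eta+\alpha/k)$ when $k>0$ and onto $(\eta+\alpha/k,\infty)$ when $k<0$, so that divergence at $y\to\pm\infty$ really corresponds to divergence in the original $x$-integral. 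Once that bookkeeping is done, the conclusion follows immediately.
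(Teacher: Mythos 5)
Your proof is correct, and it takes a route that differs from the paper's in two respects. First, the substitution: you push $X$ all the way back to the standard normal $Y=-\tfrac{1}{k}\ln\bigl(1-\tfrac{k(X-\eta)}{\alpha}\bigr)$, so that $e^{tX}=e^{t(\eta+\alpha/k)}\exp\bigl(-\tfrac{t\alpha}{k}e^{-kY}\bigr)$ and divergence is the trivially visible fact that a doubly exponential term $\exp\bigl(Ce^{|k||y|}\bigr)$, $C>0$, overwhelms the Gaussian weight $e^{-y^2/2}$. The paper instead uses the linear substitution $y=1-k(x-\eta)/\alpha$ (i.e.\ $y=e^{-kY}$ in your notation), which lands on a lognormal-type integrand $\exp\bigl(-\tfrac{\alpha t y}{k}-\tfrac{(\log y)^2}{2k^2}-\log y\bigr)$ on $(0,\infty)$; there the key estimate is that $e^{cy}$ with $c=-\alpha t/k>0$ beats the lognormal decay, bounding the exponent below by $-\tfrac{\alpha yt}{2k}$ for $y\geq y_0$ --- a slightly more delicate comparison than yours, though still elementary (both proofs are at bottom the statement that a lognormal has no mgf on the relevant side). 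Second, the case structure: you treat $k<0$ and $k>0$ symmetrically within the same integral representation, only having to track which end of the $y$-axis corresponds to the dangerous end of the support, whereas the paper proves only the case $k<0$, $t>0$ directly and then reduces $k>0$ to it via the reflection $Y=-X$, which swaps $(k,\eta)\mapsto(-k,-\eta)$ and $t\mapsto -t$. Your version buys a more transparent divergence mechanism and a uniform treatment of both signs at the cost of the sign bookkeeping you correctly flag; the paper's version stays closer to the density as written and avoids recomputing the second case. Both are complete, and your identification of the pushforward to $N(0,1)$ as the defining property of this SGED parametrization is exactly right.
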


\begin{proof}
\label{GED2p} 
Case 1: $k<0$. If $k<0$, then the pdf of SGEDS is  
\[g(x)= 
\frac{exp \Big( -\frac{1}{2} \Big(-\frac{1}{k} log\Big(1+\frac{-k(x-\eta)}{\alpha} \Big)\Big)^2\Big)}{\sqrt[]{2\pi}\alpha \Big(1+\frac{-k(x-\eta)}{\alpha}\Big)}, \quad x\geq \eta+\frac{\alpha}{k},\]
where $\alpha >0, \eta \in (-\infty,\infty)$ are constants.

The mgf is \[M(t)=\sum^\infty_{\eta+\frac{\alpha}{k}} 
e ^{tx} g(x) dx, \quad -\infty <t< \infty. \]
Let $1+(-k)\frac{x-\eta}{\alpha}=y$. then $x=\alpha(-k)^{-1}(y-1)+\eta$ and therefore
\begin{eqnarray}
\label{GED2.1} 
m(t) &=& \frac{e^{(\eta+\alpha/k)t}}{\sqrt[]{2\pi}\alpha} \int^\infty_0 \Big( e^{-\frac{\alpha t y}{k}-\frac{(\log y)^2}{2k^2}-\log y }\Big) dy.
\end{eqnarray}
Then 
\[  
-\frac{\alpha t y}{k}-\frac{(\log y)^2}{2k^2}-\log y=-\frac{\alpha y t}{k} \Big( 1+ \frac{k \Big( \frac{(\log y)^2}{2k^2}+\log y\Big)}{\alpha y t}.
\]
Therefore, for any $t>0$, there is a $y_0$ s.t. 
\[ 
-\frac{\alpha t y}{k}-\frac{(\log y)^2}{2k^2}-\log y\geq - \frac{\alpha y t}{2k}. 
\]
Using the property of $\eqref{GED2.1}$, we have 
\[M(t) \geq \frac{e^{(\eta+\alpha/k)t}}{\sqrt[]{2\pi}(-k)} \int^\infty_{y_0} \Big(e^{(\alpha y t)/(2k)}\Big) dy=\infty, \]
therefore the mgf does not exist $\forall k<0$, just as it does not exist $\forall$ $t>0$.\\

Case 2: $k>0$. Let $X$ be SGED with $k>0$. Define $Y=-X$. The pdf of $Y$ is 
\[  
h(y)=\frac{exp\Big(-\frac{1}{2}\Big(\frac{-1}{(-k')}\log \Big(1+\frac{(-k')(y-\eta')}{\alpha}\Big)\Big)^2\Big)}{\sqrt[]{2\pi}\alpha \Big(1+\frac{(-k')(y-\eta')}{\alpha}\Big)}, \quad y \geq \eta'+\frac{\alpha}{k'}, \]
where $\eta'=-\eta$ and $k'=-k$. Then $Y$ is SGED with $k'<0$, and therefore $Ee^{tY}$ does not exist for any $t>0$. Therefore, $M(t) = Ee^{tX}$ $\nexists$ $\forall$ $t<0$, as required. 
\end{proof}

\section{Empirical Results}
\subsection{Testing if innovations are Gaussian.}
After estimating the parameters of the GARCH(1,1) model for Bitcoin data with constant mean and Gaussian innovations, we proceed to test the hypothesis of whether the innovations are gaussian. Although a test of normality was proposed in \cite{Bai03}, we make use of an updated implementation that deploys a Gauss-Kronrod quadrature. 

Both the KS and CvM goodness of fit tests reject strongly (at 1\%) the null of gaussianity of innovations for GARCH(1,1). For GARCH(1,1) using Bitcoin data, the K-S test statistic was 10.02, and the CVM test statistic was 13.58. In fact, the null of gaussianity is rejected strongly for all GARCH(p,q) models, with $p,q \in \{1,\ldots,5 \}$ (see Table \ref{Tablep}). Just by way of illustration, Figure \ref{fig:trajectory} plots the Brownian motion paths of the innovations process trajectories for all GARCH(p,q) models, with $p,q \in \{1,\ldots,5 \}$ for Bitcoin. The dotted line indicates critical values for a 95\% level for the Kolmogorov-Smirnov statistic. Repeating this exercise for the rest of the currencies in sample produces comparable results. Table \ref{Tablep} shows the P-Values of KS for GARCH(1,1) across all currencies in sample. The null of gaussianity is strongly rejected for all GARCH(p,q) models, with $p,q \in \{1,\ldots,5 \}$, for all cryptocurrencies in sample. 

\begin{figure}[!htb]
\caption{Brownian motion path of a innovations process trajectories for all GARCH(p,q) models, with $p,q \in \{1,\ldots,5 \}$, using (left to right) Bitcoin data. The results are presented without loss of generality since results for the other cryptoassets considered in this paper are directly comparable, see Table \ref{Tablep}. The dotted line indicates 95\% confidence level for Kolmogorov-Smirnov statistic.}
\centering
\includegraphics[scale=.7]{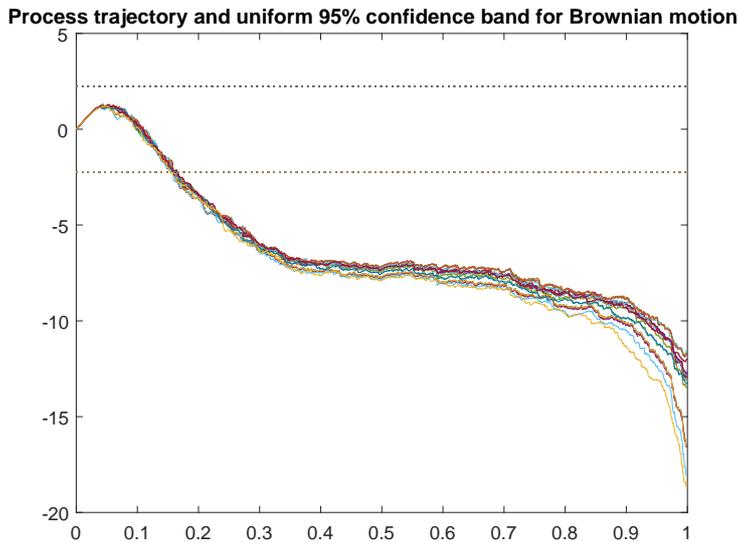}
\label{fig:trajectory}
\end{figure}

\begin{table}[!htbp]
\centering
\caption{KS test, null of gaussianity. }
\begin{tabular}{|l|l|l|}
\hline	
\label{Tablep}
Asset   & Reject at 95\%? & Reject at 99\%?\\
\hline		 
Bitcoin  &   Yes & Yes\\
Dash &  Yes & Yes\\
Dogecoin    &  Yes & Yes\\
Litecoin    & Yes & Yes\\
Maidsafecoin    &  Yes & Yes\\
Monero	&  Yes & Yes\\
Ripple & Yes & Yes\\

\hline		 
\end{tabular}
\end{table}

\newpage
\subsection{Dealing with non-Gaussian innovations}

Next, using maximum likelihood, parameters of the GARCH(1,1) model are estimated with constant mean and Generalized Error Distribution (GED) innovations.  Following \cite{Remi13}, we apply the Khmaladze transform for GED innovations to obtain pseudo-observations $u_{n,i}=G_{\hat{\nu}}(e_i),i \in \{1, \ldots, n\}$. The tests are based on the empirical distribution function 
\[
D_n(u)=\frac{1}{n}\sum^n_{i=1} \mathbb{I} (u_{n,i}\leq u), ~~u \in [0,1]. \] This should approximate the uniform distribution function $D(u)=u$ for $u\in [0,1]$ under the null that innovations follow GED distribution.

For calculating P-values, the parametric bootstrap method is used as per \cite{GK2014}. To calculate the bootstrap statistics, for the models and sample sizes considered, $N=1000$ bootstrap samples were used. Using Bitcoin data, for GARCH(1,1) the K-S test statistic was 0.9265 (p-value 8.4 \%), and the CVM test statistic was 0.1956 (p-value 5.8\%). Both the goodness of fit tests fail to reject the null of GED innovations for GARCH(1,1) using Bitcoin data. Further, we fail to reject the null of GED innovations for all GARCH(1,1) models for all cryptocurrencies in sample.

The distributions are now plotted in order to visually be able to compare specific aspects for differences. For illustration, Figure \ref{fig:Ged95} shows that the empirical process $D_n$ lies within the 95\% confidence band for the currencies in sample. The bootstrap algorithm ($N=1000$) takes around 1hr to run for all currencies using MatlabR2017a, on a 64-bit pc with 4gb of RAM.



\begin{figure}[!htb]

\caption{Approximating the uniform distribution function $D(u)=u$ for $u\in [0,1]$ under the null that innovations follow GED distribution. The empirical process $D_n$ lies within the 95\% confidence band: (left to right) Bitcoin,	Dash, Dogecoin,	Litecoin, Maidsafecoin, Monero, and Ripple. In fact, we fail to reject the null of GED innovations for all GARCH(1,1) models for all cryptocurrencies in sample. See 'Dealing with non-Gaussian innovations' subsection.}
\centering
\includegraphics[scale=.3]{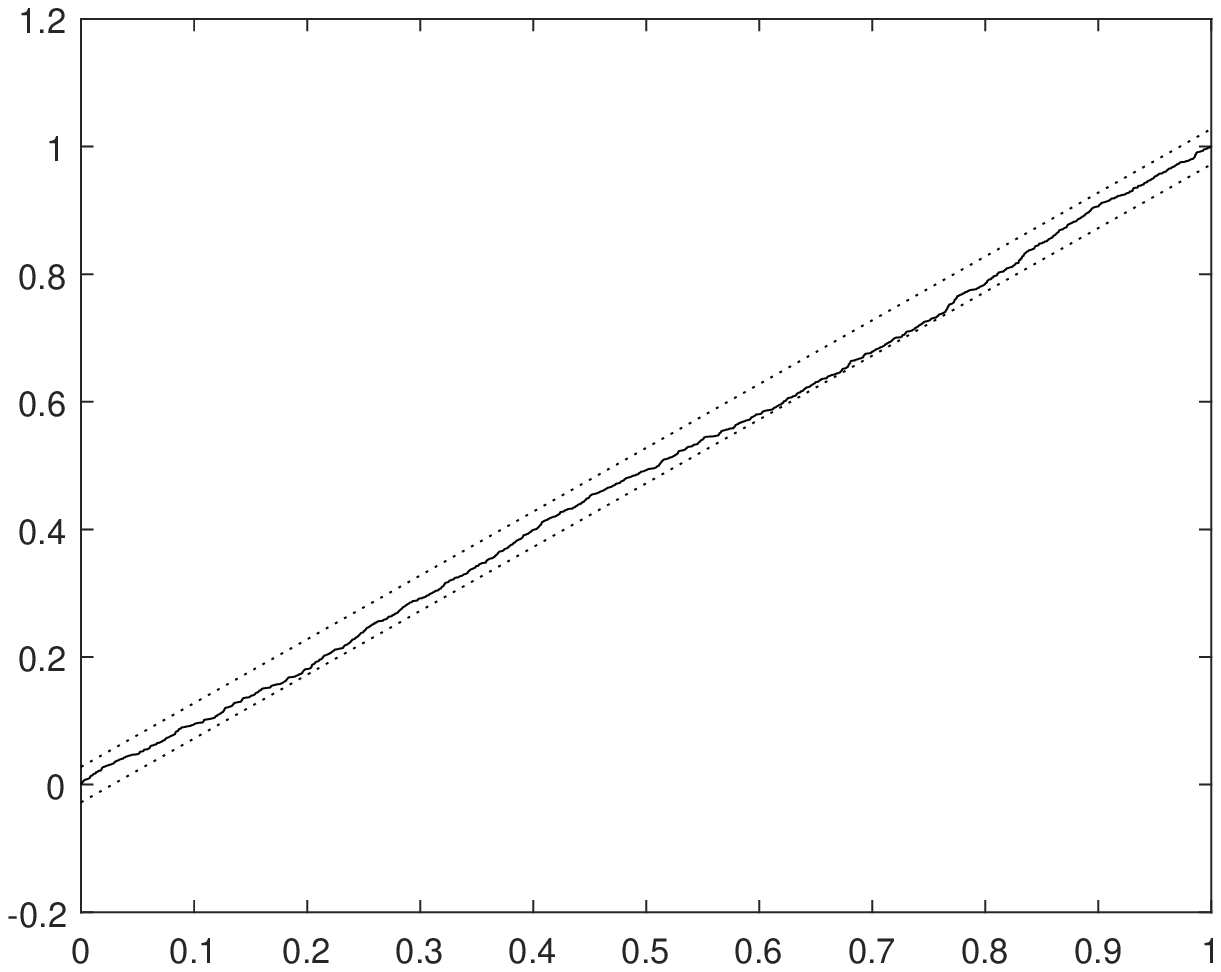}
\includegraphics[scale=.3]{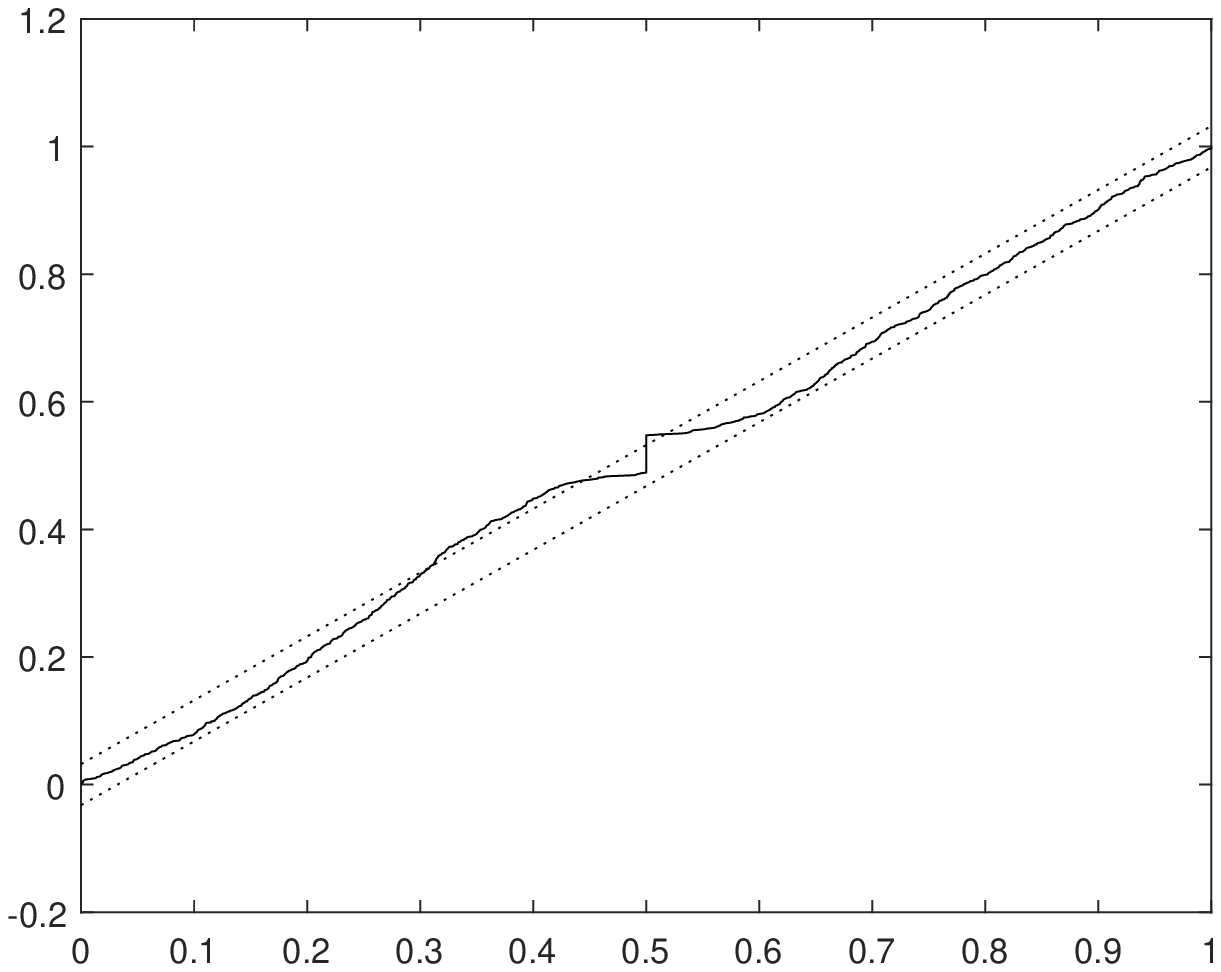}
\includegraphics[scale=.3]{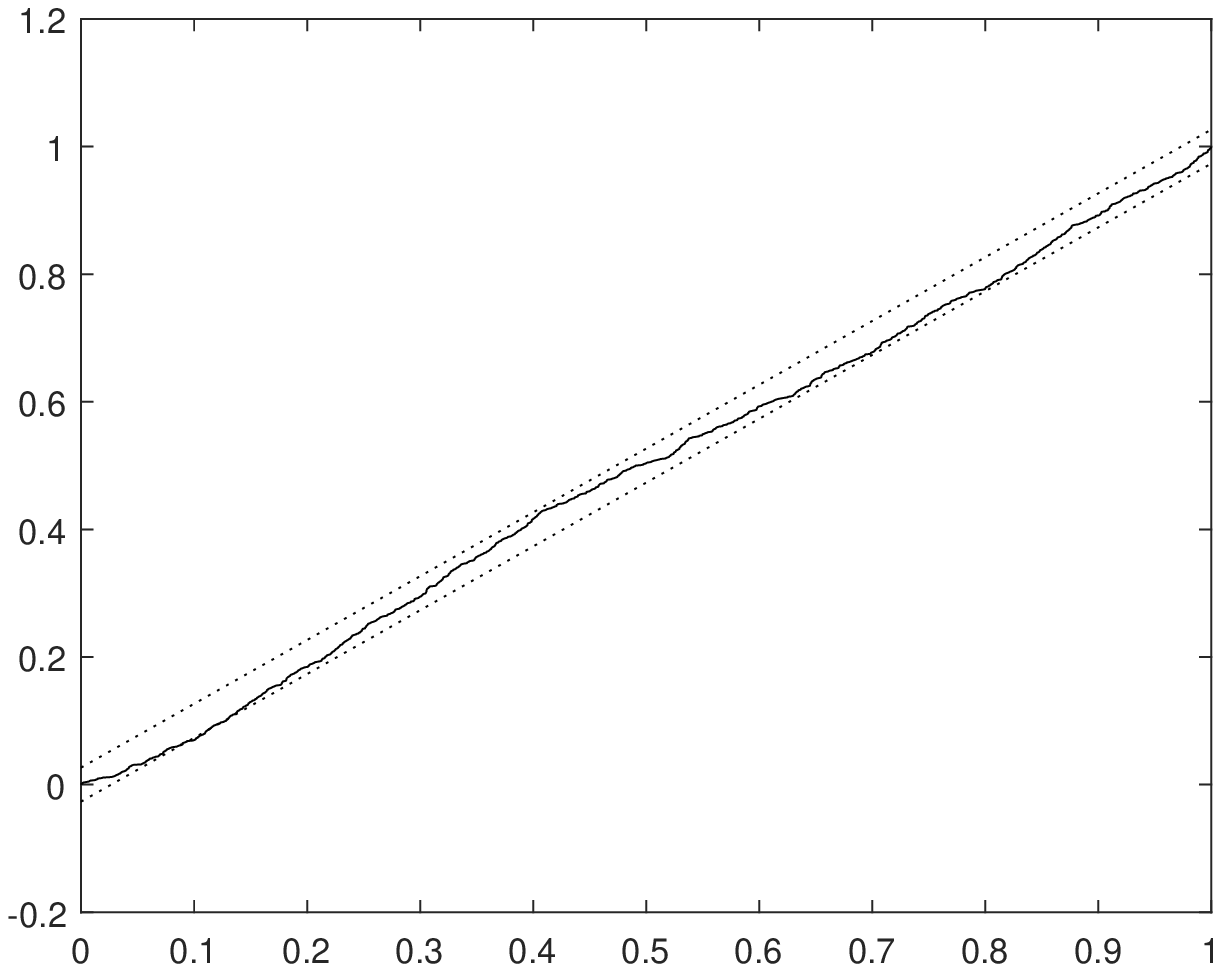}
\includegraphics[scale=.3]{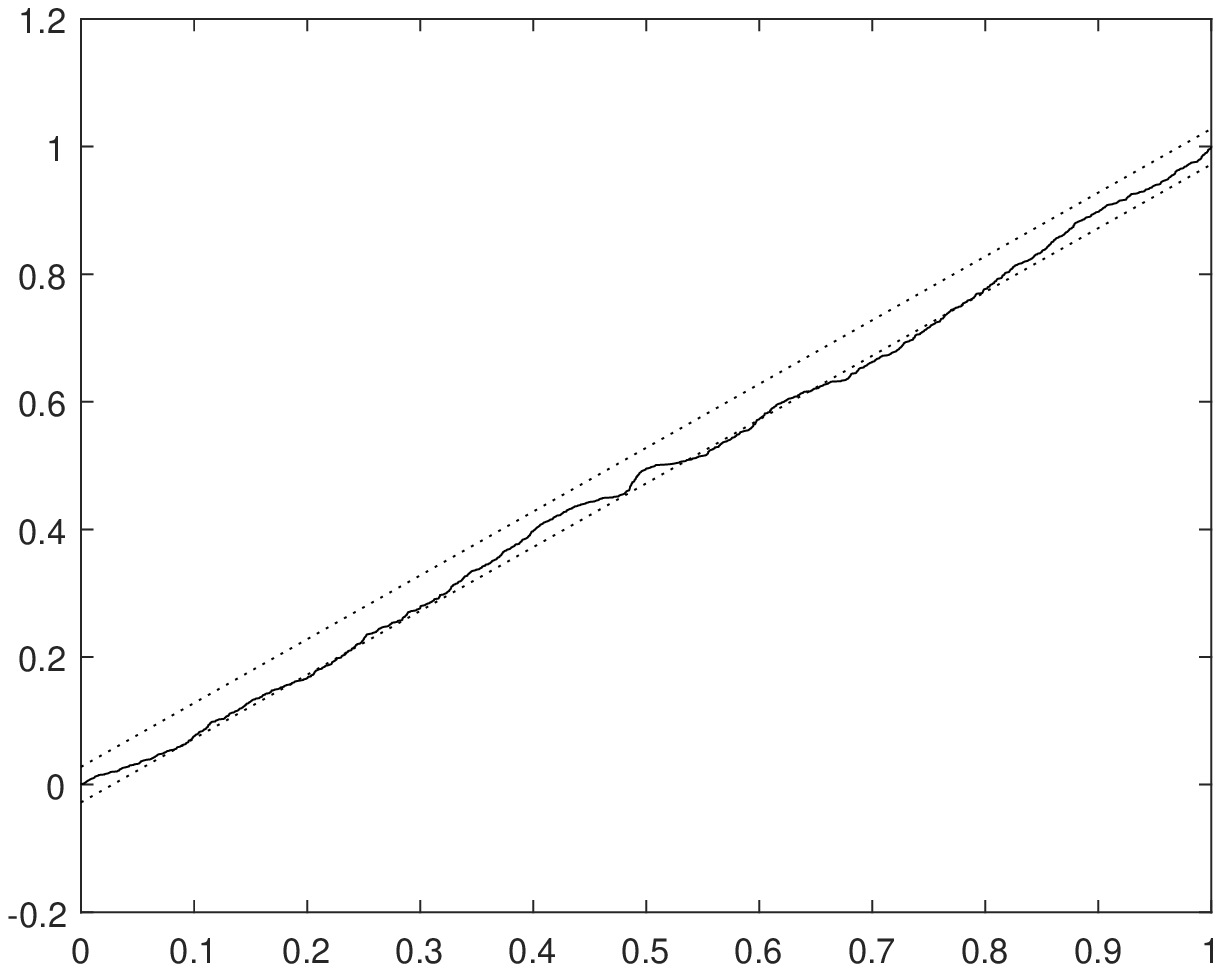}
\includegraphics[scale=.3]{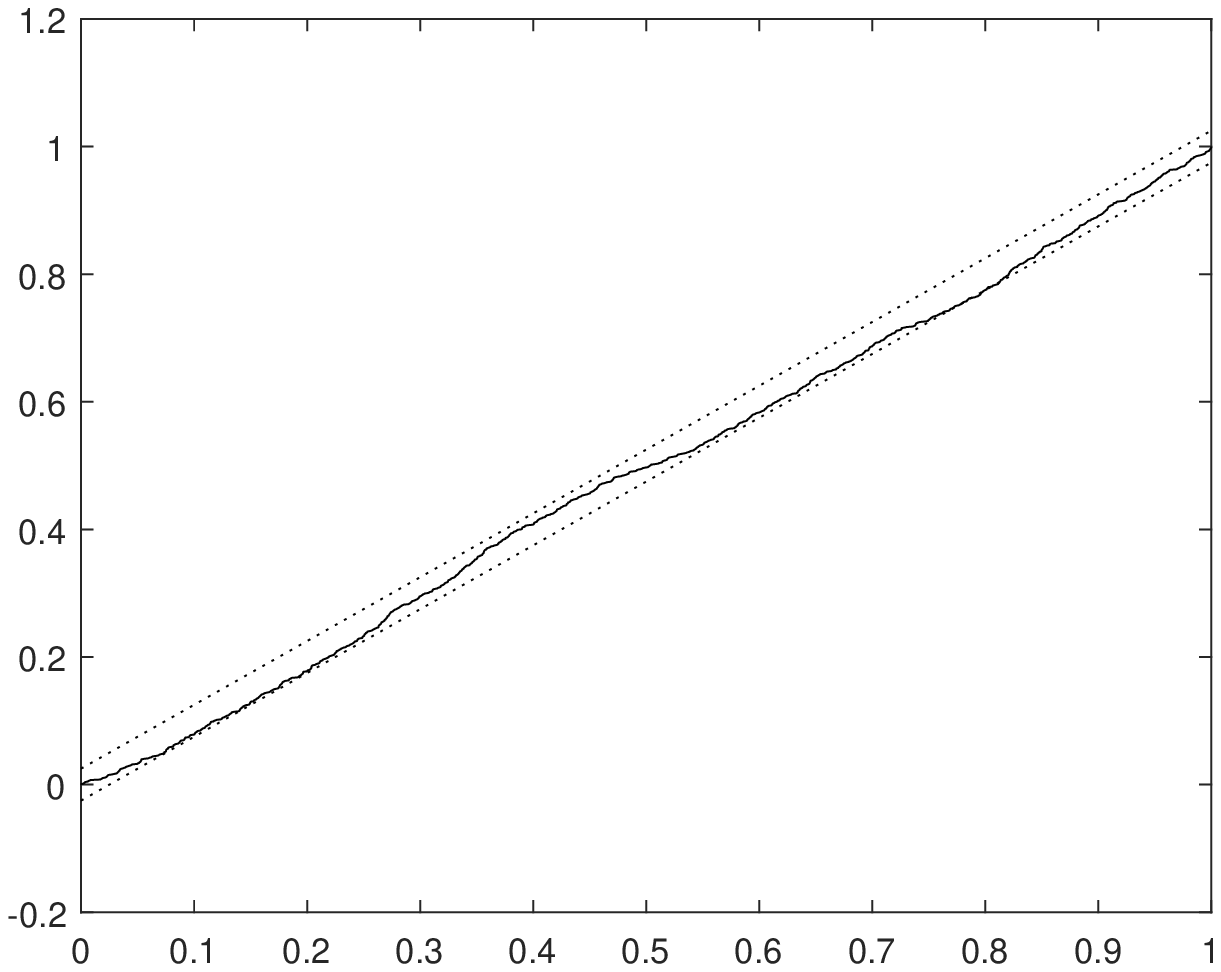}
\includegraphics[scale=.3]{DASH_GED.eps}
\includegraphics[scale=.3]{BTC_GED.eps}
\label{fig:Ged95}
\end{figure}

\newpage

\section{Discussion}

A researcher's understanding of financial asset price volatility has, for the most part, to be deduced from volatility proxies, as volatility itself is inherently unobservable. Good proxies improve parameter estimation for time volatility models. While credible parameter estimation is important, it is not an end in itself. The search for optimal proxies is beneficial to pricing financial instruments and risk management. Understanding the nature of such proxies is key for many financial applications, including asset pricing and risk management.

So far, we have presented a critique of the econometric specification which has been recently proposed in \cite{Chu17}. The authors of \cite{Chu17} fitted twelve GARCH type models and the the distribution of the innovation process were taken to be one of normal, skew normal, Student's $t$, skew Student's $t$, skew generalized error distribution, inverse Gaussian, and generalized hyperbolic distribution. Model selection criteria were then used to pick the best fit.

They found that Gaussian innovations provided the smallest values of AIC, AICc, BIC, HQC and CAIC for each cryptocurrency and each GARCH-type model. Among the twelve best fitting GARCH type models, the IGARCH (1, 1) model with normal innovations gives the smallest values of AIC, AICc, BIC, HQC and CAIC for Bitcoin, Dash, Litecoin, Maidsafecoin and Monero. The GJR-GARCH (1, 1) model with normal innovations gives the smallest values of AIC, AICc, BIC, HQC and CAIC for Dogecoin. The GARCH (1, 1) model with normal innovations gives the smallest values of AIC, AICc, BIC, HQC and CAIC for Ripple. The best fitting models were then used to provide, in their view acceptable, estimates of value at risk. The practicality of taking such an approach is open to question. There are several potential issues that are apparent.

First, \cite{Chu17} did not test whether innovations are Gaussian. To check if the innovations are Gaussian, a test of goodness-of-fit has been proposed by \cite{Bai03}, who developed a Kolmogorov-Smirnov type test of conditional distribution specifications for time series based on the comparison of an estimated conditional distribution function with the distribution function of a uniform on [0, 1]. To overcome the parameter error estimation effect, a martingale transformation is applied that delivers a nuisance-free limiting distribution for the test statistic. In order to address these concerns, we follow \cite{Remi13} and checks whether innovations are Gaussian first, before applying any model selection criteria. To this end, we deploy the Kolmogorov-Smirnov and Cramer-von-Mises test statistics. 

Second, \cite{Chu17} make loose assumptions when considering the distribution of the innovation process, namely Student's $t$ and skewed Student's $t$. Care must be taken when using the (standardized) $t$ for financial applications since its moment generating function does not exist. The nature of option pricing necessitates the use of probability distributions which provide not only a good fit to the empirical distribution of log-returns and have all their moments defined. If the innovations had a $t$ distribution under an equivalent martingale measure, the value of a call option would be infinite.

Third, although there is often a case for including higher moments (evidence of skewness in asset returns, fat tails etc), one must proceed with caution. When we include higher order moments, we should consider the combination of 3 possible cases. First is the time-dependence of higher order moments. Second is the contemporary relationship between moments (e.g. skewness and kurtosis, variance and kurtosis). Third is the time dependent relationship between moments (e.g. skewness (t) and kurtosis (t-1), variance (t) and kurtosis (t-1)). Hence, more specifications should be considered. One must be wary of uncertainties in modelling the time-dependent structure of the underlying parameters. As was forcefully argued in \cite{Jon03}, when modelling dynamic interactions among the first four moments are considered the misspecification error will likely be more substantial, potentially resulting in misguided empirical findings.

Finally, computational complexity and burden are non-negligible in this context. The distribution is determined by parameters which are estimated by MLE using an numerical optimisation algorithm. Most parametric models employ MLE technique, mainly using the numerical optimisation algorithm to deal with the potential non-linearity and asymmetry of the likelihood function. Indeed modelling the time-varying interactions among the higher order moments obtained from the underlying asymmetric distribution function (e.g. skewed Student's t-distribution) makes it much more complicated to optimise the likelihood.

So what can we learn about this? Although one can recall Cox's dictum that all models are "wrong"\footnote{ "Essentially, all models are wrong, but some are useful." in Box, George E. P.; Norman R. Draper (1987). Empirical Model-Building and Response Surfaces, p. 424, Wiley.}, model selection is an important part of any statistical analysis, and indeed is central to the pursuit of science in general. One could argue that the first step in doing applied econometrics is to establish a philosophy about models and data analysis, and then find a suitable model selection criterion. Authors in \cite{Chu17} skip this first step: they simply run a battery of AIC-type model tests on different models. 

In general, AIC finds the most predictive model. BIC finds, with probability closer to 1 as the data increases, the "correct" model if it is in the set of models considered. However, we often do not live in that sort of world. Model selection is still an art: we use our knowledge of the problem, model selection criteria, theory, and judgement to select a model. Yet, our models are often imperfect or misspecified or lack full information and so we can rarely be content with just optimizing AIC/BIC.

\section{Possible directions for future work}

The literature on Markov switching models with application to asset bubbles could make a useful contribution to the empirical debate on cryptocurrency returns. For example, one interesting  approach to test for bubbles (using Markov switching process methodology) was proposed by Hall, Psaradakis, and Sola (1999) \cite{HPS99} to capture the change from a non-bubble regime to a bubble regime. \cite{Chenug15} apply the Hall-Psaradakis-Sola test and combine it with that of \cite{PS13} to investigate the existence of bubbles in the Bitcoin market, detecting a number of short-lived bubbles over the period 2010-2014. More recently, \cite{BF17} build on this result by exploring autoregressive regime switching models for a variety of economic data series, including Bitcoin, that have previously been argued to contain bubbles, with a view to establishing whether they had a common bubble signature. With some technical caveats, they find that Bitcoin prices show bubble-like characteristics. It must be noted, however, that explosive roots need not employ regime switching methods. See, for example, the tests developed in \cite{PS13}, or the methodology proposed in \cite{BK05} and \cite{SA12} - all of which do not employ Markov Switching. It is possible that this line of research could shed some further light on the dynamics of the data generating process of cryptoassets. However, this is beyond the scope of this paper and is best left for future research.

\section{Conclusion}
This paper examined the behaviour of time series properties of cryptocurrency assets using established econometric techniques for weakly stationary financial data. Checks were performed on whether innovations are Gaussian or GED by using Kolmogorov type non-parametric tests and Khmaladze's martingale transformation. The null of gaussianity was rejected at 1\% for all GARCH(p,q) models, with $p,q \in \{1,\ldots,5 \}$, for all cryptocurrencies in sample. Although a test of normality was proposed in \cite{Bai03}, an updated test was used herewith, with a computationally advantageous Gauss-Kronrod quadrature. Parameters of GARCH models were estimated with generalized error distribution innovations using maximum likelihood. For calculating P-values, the parametric bootstrap method was used as per \cite{GK2014}. In this context, there appears to be a strong empirical argument against modelling innovations under the assumption of Gaussianity. Further, there appears to be a theoretical case for using GED innovations, rather than SGED. We demonstrated that the mgf of the Skewed GED (SGED) fails to exist under some conditions. 

These results can be used to arrive at a option pricing methodology under equivalent martingale measure - something that the methodology outlined in \cite{Chu17} does not allow one to do. Such methodology for pricing options under the GARCH assumption is described in detail in \cite{D95}, \cite{D00}, and \cite{D04}. As the cryptoasset market attracts increasing attention from regulators and investors alike, the results in this paper will be important for investment and risk management purposes.

\newpage

\newpage

\appendix
\section*{Appendices}

\subsection{Moments of absolutely continuous distributions}
Let $g$ be a measurable function, i.e. $\{x\in \R; g(x)\leq y  \} \in \shb $, and 
\[  
\E\{g(X)\}= \int_\R g(x) f(x) dx, \quad \textrm{if~} \int_\R |g(x)| f(x) dx < \infty.
\]
The variance of $X$ is 
\[ 
Var(X)=\E[ \{X-\E(X)\}^2]= \int_\R \{X-\E(X)\}^2 f(x) dx.
\]
The mgf is 
\[ 
M_X(u)=\E[e^{uX}]= \int_\R e^{uX} f(x) dx, \quad u \in \R 
\]
when the right-hand side is finite. If the mgf is finite on an open interval containing zero, then the $p^{th}$ moment is the $p^{th}$ derivative w.r.t. $u$, evaluated at $u=0$,
\[ 
E[X^p]= \frac{d^p}{du^p} M_X(u) \biggr\rvert_{u=0}
\]
\subsection{Deriving moments for error distributions in GARCH models}
Let us take a GARCH(1, 1) time series model for weakly stationary financial data, specified by
\[X_t=\sigma_t Z_t \] 
where $\{X_t\}$ is the observed data, $\{Z_t\}$ is the innovation process, and $\{\sigma^2_t\}$ is the volatility process specified by
\[
\sigma_t^2=\omega+\alpha_1 X^2_{i-1} +\beta_1 \sigma^2_{i-1}. 
\] 
For each distribution for $Z_t$, we give explicit expressions for $\E[Z_t]$, $\E[Z^2_t]$, $\E[Z^3_t]$, $\E[Z^4_t]$, Value at Risk $\textrm{VaR}_p[Z_t]$, and Expected Shortfall $\textrm{ES}_p[Z_t]$.

\subsubsection{Calculating the moments: Gaussian distribution}

If $Z_t$ are independent and identical Gaussian random variables with mean $\mu$ and unit variance
then
\begin{eqnarray*}
\E [Z_t] &=& \mu \\
\E [Z^2_t] &=& \mu^2+1 \\
\E [Z^3_t] &=& \mu^3+3 \mu \\
\E [Z^4_t] &=& \mu^4+6 \mu^2 +3 \\
\textrm{VaR}_p [Z_t] &=& \mu+ \Phi^{-1}(p) \\
\textrm{ES}_p [Z_t] &=& \mu p+ \phi( \Phi^{-1}(p))
\end{eqnarray*}
where $\phi (\cdot)$ is the probability density function of a standard Gaussian random variable, and $\Phi (\cdot)$ is the cumulative distribution function of a standard Gaussian random variable.

\subsubsection{Properties of Generalized Error Distribution (GED)}
If $\{Z_t\}$ are independent and identical generalized error random variables with location parameter $\mu$ and shape parameter $a$ then
\begin{eqnarray*}
\E [Z_t] &=& \mu \\
\E [Z^2_t] &=& \mu^2+ \frac{a^{2/a-1} \Gamma (3/a)}{\Gamma(1+1/a)} \\
\E [Z^3_t] &=& \mu^3+ \frac{3\mu a^{2/a-1} \Gamma (3/a)}{\Gamma(1+1/a)} \\
E [Z^4_t] &=& \mu^4+ \frac{6\mu^2 a^{2/a-1} \Gamma (3/a)}{\Gamma(1+1/a)} + \frac{a^{4/a-1} \Gamma (5/a)}{\Gamma(1+1/a)} \\
\textrm{VaR}_p[Z_t] &=& \left\{
                \begin{array}{ll}
   \mu-a^{1/a} \Big[ Q^{-1}\Big( \frac{1}{a},2p\Big)   \Big]^{1/a} &\textrm{~if~}p\leq 1/2 \\
   \mu+a^{1/a} \Big[ Q^{-1}\Big( \frac{1}{a},2(1-p)\Big)   \Big]^{1/a} &\textrm{~if~}p> 1/2
                \end{array}
              \right. \\
\textrm{ES}_p [Z_t] &=&      \left\{
                \begin{array}{ll} 
                \mu p- \frac{a^{1/2}}{2 \Gamma(1/a)} \Gamma \Big( \frac{1}{a}\frac{\mu-\textrm{VaR})^a}{a} &\textrm{~if~ VaR} \leq \mu \\
\mu p- \frac{a^{1/2}}{2}+\frac{a^{1/2}}{2 \Gamma(1/a)} \gamma \Big( \frac{1}{a}\frac{\textrm{VaR}-\mu)^a}{a} & \textrm{~if~VaR} > \mu
                \end{array}
              \right.
\end{eqnarray*}
where \[Q(a,x)=\Big(\Gamma(a)\Big)^{-1}\int^\infty_x t^{a-1} e^{-t} dt \]
is the regularized complementary incomplete
gamma function,
\[\gamma(a,x)=\int^x_0 t^{a-1} e^{-t} dt\] is the incomplete gamma function, and 
\[\Gamma(a,x)=\int^\infty_x t^{a-1} e^{-t} dt\]  is the complementary incomplete gamma function. This distribution is abbreviated by GED.

\subsubsection{Calculating the moments: Skewed Generalized Error Distribution (SGED)}

If $\{Z_t\}$ are independent and identical generalized error random variables with location parameter $\mu$ and shape parameter $a$ then
\begin{eqnarray*}
\E [Z_t] &=& \mu -\delta + \frac{C \theta^2}{k} \Big[-(1-\lambda)^2+(1+\lambda)^2 \Big] \Gamma \Big( \frac{2}{k}\Big)   \\
\E [Z^2_t]  &=& (\mu -\delta)^2 + \frac{2C(\mu-\delta)\theta^2}{k} \Big[-(1-\lambda)^2+(1+\lambda)^2 \Big] \Gamma \Big( \frac{2}{k}\Big)  \\
&& +\frac{C\theta^3}{k}\Big[-(1-\lambda)^3+(1+\lambda)^3 \Big] \Gamma \Big( \frac{3}{k}\Big)  \\
\E [Z^3_t]  &=& (\mu -\delta)^3+\frac{3C(\mu-\delta)^2\theta^2}{k}\Big[-(1-\lambda)^2+(1+\lambda)^2 \Big] \Gamma \Big( \frac{2}{k}\Big) \\
&& + \frac{3C(\mu-\delta)\theta^3}{k}\Big[(1-\lambda)^3+(1+\lambda)^3 \Big] \Gamma \Big( \frac{3}{k}\Big) \\
&& + \frac{C\theta^4}{k}\Big[-(1-\lambda)^4+(1+\lambda)^4 \Big] \Gamma \Big( \frac{4}{k}\Big) \\
\E [Z^4_t] &=& (\mu-\delta)^4+ \frac{4C(\mu-\delta)^3\theta^2}{k} \Big[-(1-\lambda)^2+(1+\lambda)^2 \Big] \Gamma \Big( \frac{2}{k}\Big) \\
&& + \frac{6C(\mu-\delta)^2 \theta^3}{k} \Big[(1-\lambda)^3+(1+\lambda)^3 \Big] \Gamma \Big( \frac{3}{k}\Big)  \\
&& + \frac{C(\mu-\delta)\theta^4}{k} \Big[-(1-\lambda)^4+(1+\lambda)^4 \Big] \Gamma \Big( \frac{4}{k}\Big) \\
&& + \frac{C\theta^5}{k} \Big[(1-\lambda)^5+(1+\lambda)^5 \Big] \Gamma \Big( \frac{5}{k}\Big) \\
\textrm{VaR}_p[Z_t] &=& \left\{
                \begin{array}{ll}
   \mu-\delta -(1+\lambda) \theta\Big[ Q^{-1}\Big( \frac{1}{k},\frac{2p}{1+\lambda}\Big)   \Big]^{1/k}, \textrm{~if~} p\leq \frac{1+\lambda}{2} \\
   \mu-\delta +(1+\lambda) \theta\Big[ Q^{-1}\Big( \frac{1}{k},\frac{2(1-p)}{1-\lambda}\Big)   \Big]^{1/k}, \textrm{~if~}p> \frac{1+\lambda}{2}
                \end{array}
              \right. \\
\textrm{ES}_p [Z_t] &=&      \left\{
                \begin{array}{ll} 
-\frac{C(1+\lambda)^2\theta^2}{k}\Gamma \Big( \frac{2}{k}, \frac{(\mu-\textrm{VaR}-\delta)^2}{(1+\lambda)^k\theta^k} \Big) , \textrm{~if~VaR} \leq \mu-\delta \\
-\frac{C(1+\lambda)^2\theta^2}{k}\Gamma \Big( \frac{2}{k} \Big)+\frac{C(1-\lambda)^2\theta^2}{k}\gamma \Big( \frac{2}{k}, \frac{\textrm{VaR}-\mu+\delta)^2}{(1-\lambda)^k\theta^k}\Big), \textrm{~if~VaR} > \mu-\delta
                \end{array}
              \right.
\end{eqnarray*}

\end{document}